\author{Yves Bertot\thanks{This work was partially supported by ANR contract
Compcert, ANR-05-SSIA-0019.}}
\institute{INRIA Sophia-Méditerranée}
\title{Structural abstract interpretation\\
A formal study using Coq}
\begin{document}
\maketitle
\begin{abstract}
Abstract interpreters are tools to compute approximations for
behaviors of a program.  These approximations can then be used for
optimisation or for error detection.  In this paper,
we show how to describe an abstract interpreter using the type-theory
based theorem prover Coq, using inductive types for syntax and structural
recursive programming for the abstract interpreter's kernel.  The abstract
interpreter can then be proved correct with respect to a Hoare logic
for the programming language.
\end{abstract}
\section{Introduction}
Higher-order logic theorem provers provide a description language that
is powerful enough to describe programming languages.  Inductive types
can be used to describe the language's main data structure (the syntax)
and recursive functions can be used to describe the behavior of instructions
(the semantics).  Recursive functions can also be used to describe  
tools to analyse or modify programs.  In this paper, we will describe
such a collection of recursive function to analyse programs, based
on abstract interpretation \cite{CousotCousot77}.

\subsection{An example of abstract interpretation}
We consider a small programming language with
loop statements and assignments.  Loops are written with the keywords
{\tt while}, {\tt do} and {\tt done}, assignments are written with
{\tt :=}, and several instructions can be grouped together, separating
them with a semi-column.  The instructions grouped using a semi-column
are supposed to be executed in the same order as they are written.
Comments are written after two slashes {\tt //}.

We consider the following simple program:
\begin{verbatim}
x:= 0;             // line 1
While x < 1000 do  // line 2
  x := x + 1       // line 3
done               // line 4
\end{verbatim}
We want to design a tool that is able to gather information about
the value of the variable {\tt x} at each position in the program.
For instance here, we know that after executing the first line,
{\tt x} is always in the interval [0,0]; we know that before
executing the assignment on the third line, {\tt x} is always 
smaller than 10 (because the test {\tt x < 10} was just satisfied).
With a little thinking, we can also guess that {\tt x} increases
as the loop executes, so that we can infer that before the third line,
{\tt x} is always in the interval [0,9].  On the other hand,
after the third line, {\tt x} is always in the interval [1, 10].
Now, if execution exits the loop, we can also infer that the test
{\tt x < 10} failed, so that we know that {\tt x} is larger than or
equal to {\tt 10},
but since it was at best in [0,10] before the test, we can guess that
{\tt x} is exactly 10 after executing the program.  So we can write
the following new program, where the only difference is the information
added in the comments:
\begin{verbatim}
// Nothing is known about x on this line
x := 0;            // 0 <= x <= 0
while x < 10 do
                   // 0 <= x <= 9
  x := x + 1       // 1 <= x <= 10
done
                   // 10 <= x <= 10
\end{verbatim}
We want to produce a tool that performs this analysis and produces
the same kind of information for each line in the program.  Our tool will
do slightly more: first it will also be able to take as input extra
information about variables before entering the program, second it
will produce information about variables after executing the program,
third it will associate an {\em invariant} property to all while
loops in the program.  Such an invariant is a property that is true
{\em before and after} all executions of the loop body (in our example
the loop body is {\tt x := x+1}).
A fourth feature of our tool is that it will be able to detect occasions
when we can be sure that some code is never executed.  In this case,
it will mark the program points that are never reached with a
{\tt false} statement meaning ``when this point of the program is
reached, the {\tt false} statement can be proved (in other words,
this cannot happen)''.

Our tool will also be designed in such a way that it is guaranteed
to terminate in reasonable time.  Such a tool is called a
{\em static} analysis tool, because
the extra information can be obtained without running the program:
in this example, executing the program requires at least a thousand
operations, but our reasoning effort takes less than ten steps.

Tools of this kind are useful, for example to avoid bugs in programs
or as part of efficient compilation techniques.
  For instance, the first mail-spread virus exploited
a programming error known as 
a buffer overflow (an array update was operating outside the memory
allocated for that array), but buffer overflows can be detected if
we know over which interval each variable is likely to range.

\subsection{Formal description and proofs}
Users should be able to trust the information added in programs by
the analysers.  Program analysers are themselves programs and
we can reason about their correctness.  The program analysers
we study in this paper are based on 
abstract interpretation \cite{CousotCousot77} and we use 
the Coq system \cite{Coq,coqart} to reason on its correctness.
The development described in this paper is
available on the net at the following address (there are two versions,
compatible with the latest stable release of Coq ---V8.1pl3--- and with 
the upcoming version ---V8.2).
\begin{center}
\tt http://hal.inria.fr/inria-00329572
\end{center}
This paper has 7 more sections.  Section~2 gives a rough introduction
to the notion of abstract interpretation. Section~3 describes the programming
language that is used as our playground.  The semantics
of this programming language is described using a weakest pre-condition
calculus.  This weakest pre-condition calculus is later used to argue
on the correctness of abstract interpreters.  In particular, abstract
interpretation returns an annotated instruction and an abstract state,
where the abstract state is used as a post-condition and the
annotations in the instruction describe the abstract state at the
corresponding point in the program.  Section~4 describes a first simple
abstract interpreter, where the main ideas around abstractly
interpreting assignments
and sequences are covered, but while loops are not treated.
  In Section~4, we also show that the abstract
interpreter can be formally proved correct.  In Section~5, we address
while loops in more detail and in particular we show how tests can
be handled in abstract interpretation, with applications to dead-code
elimination.  In Section~6, we observe that abstract interpretation is
a general method that can be applied to a variety of abstract domains and
we recapitulate the types, functions, and properties that are expected from each
abstract domain.  In Section~7, we show how the main abstract interpreter
can be instantiated for a domain of intervals, thus making the analysis
presented in the introduction possible.
In Section~8, we give a few concluding remarks.
\section{An intuitive view of abstract interpretation}
Abstract interpretation is a technique for the static analysis of programs.
The objective is to obtain a tool that will take programs as data, perform
some symbolic computation, and return information about all executions
of the input programs.  One important aspect is that this tool should always
terminate (hence the adjective {\em static}).  The tool can then be used
either directly to provide information about properties of variables
in the program (as in the Astree tool \cite{astree05}), or as part of
a compiler, where it can be used to guide optimization.  For instance,
the kind of interval-based analysis that we describe in this paper can
be used to avoid runtime array-bound checking in languages that
impose this kind of discipline like Java.

The central idea of abstract interpretation is to replace the values
normally manipulated in a program by sets of values, in such a way that all
operations still make sense.

For instance, if a program manipulates integer values and performs
additions, we can decide to take an abstract point of
view and only consider whether values are odd or even.  With
respect to addition,
we can still obtain meaningful results, because we know,  for instance,
that adding an even and an odd value returns an odd value.
Thus, we can decide to run programs with values taken in a new type
that contains values {\tt even} and {\tt odd}, with an addition that
respects the following table:
\begin{eqnarray*}
{\tt odd} + {\tt even} &=& {\tt odd}\\
{\tt even} + {\tt odd} &=& {\tt odd}\\
{\tt odd} + {\tt odd} &=& {\tt even}\\
{\tt even} + {\tt even} &=& {\tt even}.
\end{eqnarray*}

When defining abstract interpretation for a given abstract domain, all
operations must be updated accordingly.  The behavior of control instructions
is also modified, because abstract values may not be precise
enough to decide how a given decision should be taken.

For instance, if we know that the abstract value for a variable {\tt x}
is {\tt odd}, then we cannot tell which branch of a conditional statement
of the following form will be taken:
\begin{center}\tt
  if x < 10 then x := 0 else x := 1.
\end{center}
After the execution of this conditional statement, the abstract value
for {\tt x} cannot be {\tt odd} or {\tt even}.
This example also shows that the domain of abstract values must contain
an abstract value that represents the whole set of values, or said differently, 
an abstract value that represents the absence of knowledge.  This
value will be called {\tt top} later in the paper.

There must exist a connection between abstract values and concrete values
for abstract interpretation to work well.  This connection has been studied
since \cite{CousotCousot77} and is known as a Galois connection.  For instance,
if the abstract values are {\tt even}, {\tt odd}, and {\tt top}, and if
we can infer that a value is in \{1,2\}, then correct choices for the
abstract value are {\tt top} or {\tt even}, but obviously the abstract
interpreter will work better if the more precise {\tt even} is chosen.

Formal proofs of correctness
for abstract interpretation were already studied before,
in particular in  \cite{PichardieThesis}.  The approach taken in this
paper is different, in that it follows directly the syntax of a simple
structured programming language, while traditional descriptions are tuned
to studying a control-flow graph language. The main advantage of our approach
is that it supports a very concise description of the abstract interpreter,
with very simple verifications that it is terminating.

\section{The programming language}
In this case study, we work with a very small language containing only
assignments, sequences, and while loops.  The right-hand sides for assignments
are expressions made of numerals, variables, and addition.
The syntax of the programming language is as follows:
\begin{itemize}
\item variable names are noted \(x\), \(y\), \(x_1\), \(x'\), etc.
\item integers are noted \(n\), \(n_1\), \(n'\), etc.
\item Arithmetic expressions are noted 
\(e\), \(e_1\), \(e'\), etc.  For our case study,
these expressions can only take three forms:
\[ e ::= n~ |~ x ~|~ e_1 + e_2 \]
\item boolean expressions are noted \(b\), \(b_1\), \(b'\), etc. For
our case study, these expressions can only take one form:
\[b ::= e_1 < e_2\]
\item instructions are noted \(i\), \(i_1\), \(i'\), etc.  For our
case study, these instructions can only take three forms:
 \[i ::= x {\tt :=} e ~|~ i_1{\tt;} i_2 ~|~ {\tt while}~b~{\tt do}~i~{\tt done}\]
\end{itemize}

For the Coq encoding, we use pre-defined strings for variable names and
integers for the numeric values.  Thus, we use unbounded integers, which
is contrary to usual programming languages, but the question of using
bounded integers or not is irrelevant for the purpose of this
example.

\subsection{Encoding the language}
In our Coq encoding, the description of the various kinds of syntactic
components is given by inductive declarations.
\begin{verbatim}
Require Import String ZArith List.
Open Scope Z_scope.

Inductive aexpr : Type :=
  anum (x:Z) | avar (s:string) | aplus (e1 e2:aexpr).

Inductive bexpr : Type := blt (e1 e2 : aexpr).

Inductive instr : Type := 
  assign (x:string)(e:expr)
| seq (i1 i2:instr)
| while (b:bexpr)(i:instr).
\end{verbatim}
The first two lines instruct Coq to load pre-defined libraries and to
tune the parsing mechanism so that arithmetic formulas will be
understood as formulas concerning integers by default.

The definition for {\tt aexpr} states that expressions can only have
the three forms {\tt anum}, {\tt avar}, and {\tt aplus}, it also expresses
that the names {\tt anum}, {\tt avar}, and {\tt aplus} can be used
as function of type, {\tt Z -> aexpr}, {\tt string -> aexpr}, and
{\tt aexpr -> aexpr -> aexpr}, respectively.  The definition of {\tt aexpr}
as an inductive type also implies that we can write recursive functions
on this type.  For instance, we will use the following function to evaluate
an arithmetic expression, given a {\em valuation} function {\tt g}, which
maps every variable name to an integer value.
\begin{verbatim}
Fixpoint af (g:string->Z)(e:aexpr) : Z :=
  match e with
    anum n => n
  | avar x => g x
  | aplus e1 e2 => af g e1 + af g e2
  end.
\end{verbatim}
This function is defined by pattern-matching.  There is one pattern
for each possible form of arithmetic expression.  The third line
indicates that when the input {\tt e} has the form {\tt anum n}, then
the value {\tt n} is the result.  The fourth line indicates that
when the input has the form {\tt avar x}, then the value is obtained
by applying the function {\tt g} to {\tt x}.  The fifth line describes
the computation that is done when the expression is an addition.
There are two recursive calls to the function {\tt af}
in the
expression returned for the addition pattern.  The recursive calls
are made on direct subterms of the initial instruction, this is known
as {\em structural recursion} and guarantees that the recursive function
will terminate on all inputs.

A similar function {\tt bf} is defined to describe the boolean
value of a boolean expression.

\subsection{The semantics of the programming language}
To describe the semantics of the programming language, we simply give
a {\sl weakest pre-condition calculus} \cite{Dijkstra76}.
We describe the conditions that are
necessary to ensure that a given logical property is satisfied at the
end of the execution of an instruction, when this execution terminates.  This
weakest pre-condition calculus is defined as a pair of functions whose
input is an instruction annotated
with logical information at various points in the instruction.  The output
of the first function call {\tt pc}
is a condition that should be satisfied by the variables at the beginning
of the execution (this is the {\sl pre-condition} and it should be as easy
to satisfy as possible, hence the adjective {\sl weakest});
the output of the second function, called {\tt vc},
is a collection of logical statements.  When these statements are
valid, we know that every execution starting from a state that satisfies
the pre-condition will make the logical annotation satisfied at every
point in the program and make the post-condition satisfied if the execution
terminates.

\subsubsection{annotating programs}
We need to define a new data-type for instructions
annotated with assertions at various locations.  Each assertion
is a quantifier-free logical formula where the variables of the
program can occur.  The intended meaning is that the formula
is guaranteed to hold for every execution of the program that is
consistent with the initial assertion.

The syntax for assertions is described as follows:
\begin{verbatim}
Inductive assert : Type :=
  pred (p:string)(l:list aexpr)
| a_b (b:bexpr)
| a_conj (a1 a2:assert)
| a_not (a: assert)
| a_true
| a_false.
\end{verbatim}
This definition states that assertions can have six forms:
the first form represents the application of a predicate to
an arbitrary list of arithmetic expressions, the second represents
a boolean test: this assertion holds when the
boolean test evaluates to {\tt true}, the third form is the conjunction
of two assertions, the fourth form is the negation of an assertion,
the fifth and sixth forms give two constant assertions, which are
always and never satisfied, respectively.  In a minimal description
of a weakest pre-condition calculus, as in \cite{Bertot08a}, the last
two constants are not necessary, but they will be useful in our
description of the abstract interpreter.

Logical annotations play a central role in our case study, because
the result of abstract interpretation will be to add information about
each point in the program: this new information will be described by
assertions.

To consider whether an assertion holds, we need to know what meaning
is attached to each predicate name and what value is attached to
each variable name.  We suppose the meaning of predicates is
given by a function {\tt m} that maps predicate names and list of
integers to propositional values and the value of variables is given by
a valuation as in the function {\tt af} given above.  Given such
a meaning for predicates and such a valuation function for variables,
we describe the computation of the property associated to an assertion
as follows:
\begin{verbatim}
Fixpoint ia (m:string->list Z->Prop)(g:string->Z)
   (a:assert) : Prop :=
  match a with
    pred s l => m s (map (af g) l)
  | a_b b => bf g b = true
  | a_conj a1 a2 => (ia m g a1) /\ (ia m g a2)
  | a_not a => not (ia m g a)
  | a_true => True
  | a_false => False
  end.
\end{verbatim}
The type of this function exhibits a specificity of type theory-based
theorem proving: propositions are described by {\em types}.  The
Coq system also provides a type of types, named {\tt Prop}, whose
elements are the types that are intended to be used as propositions.  Each
of these types contains the proofs of the proposition they represent.
This is known as the {\em Curry-Howard isomorphism}.  For instance,
the propositions that are unprovable are represented
by empty types.  Here, assertions are data, their
interpretation as propositions are types, which belongs to the {\tt Prop}
type.  More details about this description of propositions as types
is given in another article on type theory in the same volume.

Annotated instructions are in a new data-type, named
{\tt a\_instr}, which is very close to the {\tt instr} data-type.
The two modifications are as follows: first an extra operator {\tt pre}
is added to make it possible to attach assertions to any instruction,
second {\tt while}
loops are mandatorily annotated wih an {\em invariant} assertion.
In concrete syntax, we will write {\tt \{ \(a\) \} \(i\)} for the instruction
\(i\) carrying the assertion \(a\) (noted {\tt pre a i} in the Coq encoding).
\begin{verbatim}
Inductive a_instr : Type :=
  pre (a:assert)(i:a_instr)
| a_assign (x:string)(e:aexpr)
| a_seq (i1 i2:a_instr)
| a_while (b:bexpr)(a:assert)(i:a_instr).
\end{verbatim}
\subsubsection{Reasoning on assertions}
We can reason on annotated programs, because there are logical reasons
for programs to be consistent with assertions.  The idea is to
compute a collection of logical formulas associated to an annotated program
and a final logical formula, the {\em post-condition}.  When this
collection of formulas holds, there exists an other logical formula,
the {\em pre-condition} whose satisfiability before executing the program
is enough to guarantee that the post-condition holds after executing the
program.

  Annotations added to an instruction (with
the help of the {\tt pre} construct) must be understood as formulas that
hold just before executing the annotated instruction.  Assertions
added to {\tt while} loops must be understood as {\em invariants}, they
are meant to hold at the beginning and the end every time the
inner part of the while loop is executed.

When assertions are present in the annotated
instruction, they are taken for granted.
For instance, when the instruction is {\tt \{x = 3\} x := x + 1 }, the
computed pre-condition is {\tt x = 3}, whatever the post-condition is.

When the instruction is a plain assignment, one can find the pre-condition
by substituting the assigned variable with the assigned expression
in the post-condition.  
For instance, when the post condition is {\tt x = 4} and the instruction
is the assignement {\tt x := x + 1}, it suffices that the pre-condition
{\tt x + 1 = 4}  is satisfied before executing the assignment to ensure that
the post-condition is satisfied after executing it.

When the annotated instruction is a while loop, the pre-condition simply is the
invariant for this while loop.  When the annotated instruction is a sequence of
two instructions, the pre-condition is the pre-condition computed for
the first of the two instructions, but using the pre-condition of
the second instruction as the post-condition for the first instruction.

\subsubsection{Coq encoding for pre-condition computation}
To encode this pre-condition function in Coq, we need to describe
functions that perform the substitution of a variable with an arithmetic
expression in arithmetic expressions, boolean expressions, 
and assertions.  These substitution functions
are given as follows:
\begin{verbatim}
Fixpoint asubst (x:string) (s:aexpr) (e:aexpr) : aexpr :=
  match e with
    anum n => anum n
  | avar x1 => if string_dec x x1 then s else e
  | aplus e1 e2 => aplus (asubst x s e1) (asubst x s e2)
  end.

Definition bsubst (x:string) (s:aexpr) (b:bexpr) : bexpr :=
  match b with
    blt e1 e2 => blt (asubst x s e1) (asubst x s e2)
  end.

Fixpoint subst (x:string) (s:aexpr) (a:assert) : assert :=
  match a with
    pred p l => pred p (map (asubst x s) l)
  | a_b b => a_b (bsubst x s b)
  | a_conj a1 a2 => a_conj (subst x s a1) (subst x s a2)
  | a_not a => a_not (subst x s a)
  | any => any
  end.
\end{verbatim}
In the definition of {\tt asubst}, the function {\tt string\_dec} compares
two strings for equality.  The value returned by this function can be used
in an {\tt if-then-else} construct, but it is not a boolean value (more
detail can be found in \cite{coqart}).  The rest of the code is just a plain
traversal of the structure of expressions and assertions.  Note also
that the last pattern-matching rule in {\tt subst} is used for both
{\tt a\_true} and {\tt a\_false}.

Once we know how to substitute a variable with an expression, we
can easily describe the computation of the pre-condition for
an annotated instruction and a post-condition.
This is given by the following simple recursive
procedure:
\begin{verbatim}
Fixpoint pc (i:a_instr) (post : assert) : assert :=
  match i with
    pre a i => a
  | a_assign x e => subst x e post
  | a_seq i1 i2 => pc i1 (pc i2 post)
  | a_while b a i => a
  end.
\end{verbatim}

\subsubsection{A verification condition generator}
When it receives an instruction carrying an annotation,
the function {\tt pc} simply returns the annotation.  In this sense, the
pre-condition function takes the annotation for granted.  To make sure
that an instruction is consistent with its pre-condition, we need to check that
the assertion really is strong enough to ensure the post-condition.

For instance, when the post-condition is {\tt x < 10} and the instruction
is the annotated assigment {\tt \{ x = 2 \} x := x + 1}, satisfying
{\tt x = 2} before the assignment is enough to ensure that the post-condition
is satisfied.  On the other hand, if the annotated instruction was
{\tt \{x < 10 \} x := x + 1}, there would be a problem because there are
cases where {\tt x < 10} holds before executing the assignment and {\tt x < 10}
does not hold after.

In fact, for assigments that are not annotated with assertions, the function
{\tt pc} computes the best formula, the {\em weakest pre-condition}.  Thus,
in presence of an annotation, it suffices to verify that the annotation
does imply the weakest pre-condition.  We are now going to describe a function
that collects all the verifications that need to be done.
More precisely, the new function will compute conditions that are sufficient
to ensure that the pre-condition from the previous section is strong
enough to guarantee that the post-condition holds after executing the program,
when the program terminates.

The verification that an annotated instruction is consistent with a
post-condition thus returns a sequence of implications between
assertions.  When all these implications are logically valid, there is
a guarantee that satisfying the pre-condition before executing the
instruction is enough to ensure that the post-condition will also be satisfied
after executing the instruction.  This guarantee is proved formally in
\cite{Bertot08a}.

When the instruction is a plain assignment without annotation, there is
no need to verify any implication because the computed pre-condition is
already good enough.  When the instruction is an annotated instruction
{\tt \{ \(A\) \} \(i\)} and the post-condition is \(P\), we can first
compute the pre-condition \(P'\) and a list of implications \(l\) for
the instruction \(i\) and the post-condition \(P\).  We then only need
to add \( A \Rightarrow P'\) to \(l\) to get the list of conditions
for the whole instruction.

For instance, when the post-condition is {\tt x=3} and the instruction
is the assignment {\tt x := x+1}, the pre-condition computed by {\tt pc} is
{\tt x + 1 = 3} and this is obviously good enough for the post-condition
to be satisfied.  On the other hand, when the instruction is an annotated
instruction, {\tt \{\(P\)\} x := x+1}, we need to verify that 
\(P \Rightarrow {\tt x + 1 = 3}\) holds.  

If we look again at the first example in this section,
concerning an instruction {\tt \{x < 10\} x := x+1} and a post-condition
{\tt x < 10}, there is a problem, because a value of 9 satisfies the
pre-condition, but execution leads to a value of 10, which does not
satisfy the post-condition  The condition
generator constructs a condition of the form
{\tt x < 10 \(\Rightarrow\) x + 1 < 10}.  The fact that this logical formula
is actually unprovable relates to the fact that the triplet composed of
the pre-condition, the assignment, and the post-condition is actually
inconsistent.

When the instruction is a sequence of two instructions {\tt \(i_1\);\(i_2\)}
and the post-condition is \(P\),
we need to compute lists of conditions for both sub-components \(i_1\) and
\(i_2\).  The list of conditions for \(i_2\) is computed for the post-condition
for the whole construct \(P\), while the list of conditions of \(i_1\) is
computed taking as post-condition the pre-condition of \(i_2\) for \(P\).
This is consistent with the intuitive explanation that it suffices
that the pre-condition for an instruction holds to ensure that the 
post-condition will hold after executing that instruction.  If we want \(P\)
to hold after executing \(i_2\), we need the pre-condition of \(i_2\) for
\(P\) to be satisfied and it is the responsibility of the instruction \(i_1\)
to guarantee this.  Thus, the conditions for \(i_1\) can be computed
with this assertion as a post-condition.

When the instruction is a while loop, of the form
{\tt while \(b\) do \{ \(A\) \} \(i\) done} we must remember that the assertion
\(A\) should be an invariant during the loop execution.
This is expressed by requiring that \(A\) is satisfied before executing
\(i\) should be enough to guarantee that \(A\) is also satisfied after
executing \(i\).  However, this is needed only in the cases where the
loop test \(b\) is also satisfied, because when \(b\) is not satisfied the
inner instruction of the while loop is not executed.  At the end of the
execution, we can use the information that the invariant \(A\) is satisfied
and the information that we know the loop has been executed because the test
eventually failed.  The program is consistent when these two logical 
properties are enough to imply the initial post-condition \(P\).
Thus, we must first
compute the pre-condition \(A'\) for the inner instruction \(i\) and the
post-condition \(A\), compute the list of conditions for \(i\) with
\(A\) as post-condition, add the condition \(A\wedge b\Rightarrow A'\),
and add the condition \(A\wedge \neg b \Rightarrow P\).

\subsubsection{Coq encoding of the verification condition generator}
The verification conditions always are implications.  We provide a new
data-type for these implications:
\begin{verbatim}
Inductive cond : Type := imp (a1 a2:assert).
\end{verbatim}
The computation of verification conditions is then simply described
as a plain recursive function, which follows the structure of 
annotated instructions.
\begin{verbatim}
Fixpoint vc (i:a_instr)(post : assert) : list cond :=
  match i with
    pre a i => (imp a (pc i post))::vc i post
  | a_assign _ _ => nil
  | a_seq i1 i2 => vc i1 (pc i2 post)++vc i2 post
  | a_while b a i =>
    (imp (a_conj a (a_b b)) (pc i a))::
    (imp (a_conj a (a_not (a_b b))) post)::
    vc i a
  end.
\end{verbatim}

Describing the semantics of programming language using a verification
condition generator is not the only approach that can be used to describe
the language.  In fact, this approach is partial, because it describes
properties of inputs and outputs when instruction execution terminates, but
it gives no information about termination.  More precise descriptions can
be given using operational or denotational semantics and the consistency
of this verification condition generator with such a complete semantics
can also be verified formally.  This is done in \cite{Bertot08a}, but it
is not the purpose of this article.

When reasoning about the correctness of a given annotated instruction,
we can use the function {\tt vc} to obtain a list of conditions.  It is
then necessary to reason on the validity of this list of conditions.
What we want to verify is that the implications hold for every possible
instantiation of the program variables.  This is described by the following
function.
\begin{verbatim}
Fixpoint valid (m:string->list Z ->Prop) (l:list cond) : Prop :=
  match l with
    nil => True
  | c::tl =>
    (let (a1, a2) := c in forall g, ia m g a1 -> ia m g a2)
      /\ valid m tl
  end.
\end{verbatim}
An annotated program \(i\) is consistent with a given post-condition 
\(p\) when the property {\tt valid (vc \(i\) \(p\))} holds.  This means
that the post-condition is guaranteed to hold after executing the
instruction if the computed pre-condition was satisfied before the
execution and the execution of the instruction terminates.

\subsection{A monotonicity property}
In our study of an abstract interpreter, we will use a property of the
condition generator.

\begin{theorem} For every annotated instruction \(i\),
if \(p_1\) and \(p_2\) are two post-conditions such
that \(p_1\) is stronger than \(p_2\), if the pre-condition for \(i\) and
\(p_1\)
is satisfied and all the verification conditions for \(i\) and 
the post-condition
\(p_1\) are valid, then the pre-condition for \(i\) and \(p_2\)
is also satisfied and 
the verification conditions for \(i\) and \(p_2\) are also valid.
\end{theorem}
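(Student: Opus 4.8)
The plan is to proceed by structural induction on the annotated instruction $i$, with the predicate meaning $m$ held fixed throughout. The first thing to notice is that the statement, read literally (``the pre-condition holds for one valuation $g$''), is too weak to serve as an induction hypothesis. The reason is already visible in the \texttt{pre} case: to re-establish the verification condition $\texttt{imp }A\,(\texttt{pc }i\,p_2)$ that heads $\texttt{vc }(\texttt{pre }A\,i)\,p_2$, I need the \emph{universal} fact that $\texttt{pc }i\,p_1$ implies $\texttt{pc }i\,p_2$ under \emph{every} valuation, not merely that it holds at a single $g$. I would therefore strengthen the goal and prove, by induction on $i$, the statement: if $(\forall g,\ \texttt{ia }m\,g\,p_1 \to \texttt{ia }m\,g\,p_2)$ and $\texttt{valid }m\,(\texttt{vc }i\,p_1)$, then $(\forall g,\ \texttt{ia }m\,g\,(\texttt{pc }i\,p_1) \to \texttt{ia }m\,g\,(\texttt{pc }i\,p_2))$ and $\texttt{valid }m\,(\texttt{vc }i\,p_2)$. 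The theorem as stated is then an immediate corollary: instantiate the first conclusion at the particular $g$ for which the pre-condition is assumed to hold.

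The four cases are handled as follows. For $\texttt{a\_while }b\,A\,i$ and for $\texttt{pre }A\,i$ the pre-condition ignores the post-condition (\texttt{pc} returns $A$ in both), so pre-condition monotonicity is trivial and the work lies entirely in the validity conclusion. In the \texttt{while} case the head condition $A\wedge b \Rightarrow \texttt{pc }i\,A$ and the conditions coming from the body $\texttt{vc }i\,A$ do not mention the post-condition, so they are copied verbatim from the $p_1$ hypothesis; only $A\wedge\neg b \Rightarrow p_2$ is new, and it follows by chaining $A\wedge\neg b \Rightarrow p_1$ (from the hypothesis) with $p_1\Rightarrow p_2$. In the \texttt{pre} case I apply the induction hypothesis to $i$ (its validity premise is the tail of $\texttt{vc }(\texttt{pre }A\,i)\,p_1$), obtaining both the universal monotonicity of $\texttt{pc }i$ and $\texttt{valid }m\,(\texttt{vc }i\,p_2)$, then derive the new head condition $A\Rightarrow \texttt{pc }i\,p_2$ by composing $A\Rightarrow\texttt{pc }i\,p_1$ with that monotonicity. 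The sequence case $\texttt{a\_seq }i_1\,i_2$ uses a routine lemma that \texttt{valid} distributes over list concatenation to split the hypothesis into $\texttt{valid }m\,(\texttt{vc }i_1\,(\texttt{pc }i_2\,p_1))$ and $\texttt{valid }m\,(\texttt{vc }i_2\,p_1)$; I first apply the induction hypothesis to $i_2$ with post-conditions $p_1,p_2$, which yields $\texttt{pc }i_2\,p_1 \Rightarrow \texttt{pc }i_2\,p_2$, and then apply it again to $i_1$ with the \emph{transported} post-conditions $\texttt{pc }i_2\,p_1$ and $\texttt{pc }i_2\,p_2$. This gives both $\texttt{pc }i_1\,(\texttt{pc }i_2\,p_1)\Rightarrow\texttt{pc }i_1\,(\texttt{pc }i_2\,p_2)$ (the sequence's pre-condition monotonicity) and $\texttt{valid }m\,(\texttt{vc }i_1\,(\texttt{pc }i_2\,p_2))$; recombining the latter with $\texttt{valid }m\,(\texttt{vc }i_2\,p_2)$ via distributivity closes the case. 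This second application is precisely where the induction hypothesis must be available for arbitrary pairs of post-conditions, confirming that the strengthening is unavoidable.

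The only case needing an external ingredient is $\texttt{a\_assign }x\,e$: here \texttt{vc} is empty, so validity is immediate, but \texttt{pc} is $\texttt{subst }x\,e$, and I must show that substitution preserves $p_1\Rightarrow p_2$. For this I would use the standard substitution lemma relating syntactic substitution to a change of valuation: $\texttt{ia }m\,g\,(\texttt{subst }x\,e\,a)$ holds exactly when $\texttt{ia }m\,g'\,a$ holds, where $g'$ is $g$ updated at $x$ to $\texttt{af }g\,e$ (proved by induction on $a$, using the corresponding facts for \texttt{af} and \texttt{bf}); monotonicity then follows by instantiating $p_1\Rightarrow p_2$ at $g'$. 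I expect the main obstacle to be conceptual rather than computational, namely recognising that the pre-condition part of the claim must be generalised to a universal implication before the induction will close, since the pointwise reading breaks in the \texttt{pre} and sequence cases. Once the statement is phrased that way, every case reduces to straightforward propagation of implications, one substitution lemma, and the distributivity of \texttt{valid} over concatenation.
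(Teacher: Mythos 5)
Your proposal is correct and follows essentially the same route as the paper: the strengthened statement you arrive at (universal quantification over valuations in both the hypothesis and the pre-condition conclusion, conjoined with validity of the conditions) is precisely the paper's Coq lemma \texttt{vc\_monotonic}, and your case analysis --- trivial validity plus the substitution-as-valuation-update lemma for assignments, two chained applications of the induction hypothesis for sequences, composing \(A \Rightarrow \texttt{pc}\ i\ p_1\) with monotonicity for \texttt{pre}, and transitivity on \(A \wedge \neg b \Rightarrow p_1\) for loops --- matches the paper's proof case by case.
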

\begin{proof}
This proof is done in the context of a given mapping from predicate
names to actual predicates, \(m\).  The
 property is proved by induction on the structure of the instruction
\(i\).
The statement \(p_1\) is stronger than \(p_2\)  when the
implication \(p_1\Rightarrow p_2\) is valid.  In other words, for
every assignment of variables \(g\), the logical value of \(p_1\) implies
the logical value of \(p_2\).

If the instruction is an assignment,
we can rely on a lemma: the value of any assertion
\({\tt subst}~x~e~p\) in any valuation \(g\) is equal to the value of
the assertion \(p\) in the valuation \(g'\) that is equal to \(g\) on
every variable but \(x\), for which it returns the value of \(e\) in
the valuation \(g\).  Thus, the precondition for the assignment
{\tt \(x\) := \(e\)} for \(p_i\) is \({\tt subst}~x~e~p_i\) and the
the validity of \({\tt subst}~x~e~p_1 \Rightarrow {\tt subst}~x~e~p_2\) simply
is an instance of the validity of \(p_1\Rightarrow p_2\), which is given
by hypothesis.  Also, when the instruction is an assignment, there is
no generated verification condition and the second part of the statement
holds.

If the instruction is a sequence \( i_1 {\tt;} i_2\), then we know
by induction hypothesis that the pre-condition \(p'_1\) 
for \(i_2\) and \(p_1\) is
stronger than the pre-condition \(p'_2\) for \(i_2\) and \(p_2\) and all the
verification conditions for that part are valid; we can use an induction
hypothesis again to obtain that the pre-condition for \(i_1\) and \(p'_1\) is
stronger than the pre-condition for \(i_1\) and \(p'_2\), and the corresponding
verification conditions are all valid.  The last
two pre-conditions are the ones we need to compare, and the whole
set of verification conditions is the union of the sets which we know
are valid.

If the instruction is an annotated instruction \({\tt\{}a{\tt\}}i\),
the two pre-conditions for \(p_2\) and \(p_1\) alre always \(a\), so
the first part of the statement trivially holds.  Moreover, we
know by induction hypothesis that the pre-condition \(p'_1\)
for \(i\) and \(p_1\) is stronger that the pre-condition \(p'_2\) for
\(i\) and \(p_2\).  The verification conditions for the whole instruction
and \(p_1\) (resp. \(p_2\)) are the same as for the sub-instruction,
with the condition  \(a \Rightarrow p'_1\) (resp. \(a\Rightarrow p'_2\))
added.  By hypothesis, \(a\Rightarrow p'_1\) holds, by induction
hypothesis \(p'_1\Rightarrow p'_2\), we can thus deduce that 
\(a\Rightarrow p'_2\) holds.

If the instruction is a loop \({\tt while}~b~{\tt do} \{a\}~i~{\tt done}\),
most verification conditions and generated
pre-conditions only depend on the loop invariant.  The only thing
that we need to check is the verification condition containing the
invariant, the negation of the test and the post-condition.  By hypothesis,
\(a\wedge \neg b \Rightarrow p_1\) and \(p_1\Rightarrow p_2\) are valid.
By transitivity of implication we obtain \(a\wedge\neg b\Rightarrow p_2\)
easily.
\end{proof}
In Coq, we first prove a lemma that expresses that
the satisfiability of an assertion {\tt a} where a variable {\tt x} is
substituted with an arithmetic expression {\tt e'} for a valuation {\tt g}
is the same as the satisfiability of the assertion {\tt a} without
substitution, but for a valuation that maps {\tt x} to the value of {\tt e'}
in {\tt g} and coincides with {\tt g} for all other variables.
\begin{verbatim}
Lemma subst_sound :
  forall m g a x e',
    ia m g (subst x e' a) =
    ia m (fun y => if string_dec x y then af g e' else g y) a.
\end{verbatim}
This lemma requires similar lemmas for arithmetic expressions, boolean
expressions, and lists of expressions.  All are proved by induction on
the structure of expressions.

\subsubsection{An example proof for substitution}  For instance, the statement
for the substitution in arithmetic expressions is as follows:
\begin{verbatim}
Lemma subst_sound_a :
  forall g e x e',
   af g (asubst x e' e) =
   af (fun y => if string_dec x y then af g e' else g y) e.
\end{verbatim}
The proof can be done in Coq by an induction on
the expression {\tt e}.  This leads the system to generate
three cases, corresponding to the three constructors of the {\tt aexpr}
type.  The combined tactic we use is as follows:
\begin{verbatim}
intros g e x e'; induction e; simpl; auto.
\end{verbatim}
The tactic {\tt induction e} generates three goals and
the tactics {\tt simpl} and {\tt auto} are applied to all of them.
One of the cases is the case for
the {\tt anum} constructor, where
both instances of the {\tt af} function compute to the
value carried by the constructor, thus {\tt simpl} forces the computation
and leads to an equality where both sides are equal.  In this case, {\tt auto}
solves the goal.  Only the other two goals remain.

The first other goal is concerned with the {\tt avar} construct.  In this
case the expression has the form {\tt avar s} and the expression
{\tt subst x e' (avar s)} is transformed into the following expression
by the {\tt simpl} tactic.
\begin{center}
\tt if string\_dec x s then e' else (avar s)
\end{center}
For this case, the system displays a goal that has the following shape:
\begin{verbatim}
  g : string -> Z
  s : string
  x : string
  e' : aexpr
  ============================
   af g (if string_dec x s then e' else avar s) =
   (if string_dec x s then af g e' else g s)
\end{verbatim}
In Coq goals, the information that appears above the horizontal bar
is data that is known to exist, the information below the horizontal
bar is the expression that we need to prove.  Here the information
that is known only corresponds to typing information.

We need to reason by cases on
the values of the expression {\tt string\_dec x s}.  The tactic
{\tt case ...} is used for this purposes.  It generate two goals,
one corresponding to the case where {\tt string\_dec x s} has
 an affimative value and one corresponding to the case where
{\tt string\_dec x s} has a negative value.  In each the goal,
the {\tt if-then-else} constructs are reduced accordingly.
In the goal where {\tt string\_dec x s}  is affirmative,
both sides of the equality reduce
to {\tt af g e'}; in the other goal,
both sides of the equality reduce to {\tt g x}.  Thus in both cases,
the proof becomes easy.  This reasoning step
is easily expressed with the following combined tactic:
\begin{verbatim}
case (string_dec x s); auto.
\end{verbatim}
There only remains a goal for the last possible form
of arithmetic expression, {\tt aplus e1 e2}.  The induction tactic 
provides {\em induction hypotheses} stating that the property we
want to prove already holds for {\tt e1} and {\tt e2}.  After
symbolic computation of the functions {\tt af} and {\tt asubst}, as
performed by the {\tt simpl} tactic, the goal has the following shape:
\begin{verbatim}
  ...
  IHe1 : af g (asubst x e' e1) =
         af (fun y : string =>
              if string_dec x y then af g e' else g y) e1
  IHe2 : af g (asubst x e' e2) =
         af (fun y : string =>
              if string_dec x y then af g e' else g y) e2
  ============================
   af g (asubst x e' e1) + af g (asubst x e' e2) =
   af (fun y : string =>
        if string_dec x y then af g e' else g y) e1 +
   af (fun y : string =>
        if string_dec x y then af g e' else g y) e2
\end{verbatim}
This proof can be finished by rewriting with the two equalities
named {\tt IHe1} and {\tt IHe2} and then recognizing that
both sides of the equality are the same, as required by the following tactics.
\begin{verbatim}
rewrite IHe1, IHe2; auto.
Qed.
\end{verbatim}

We can now turn our attention to the main result,
which is then expressed as the following statement:
\begin{verbatim}
Lemma vc_monotonic :
  forall m i p1 p2, (forall g, ia m g p1 -> ia m g p2) ->
   valid m (vc i p1) ->
   valid m (vc i p2) /\
   (forall g, ia m g (pc i p1) -> ia m g (pc i p2)).
\end{verbatim}
To express that this proof is done by induction on the structure of
instructions, the first tactic sent to the proof system has the form:
\begin{verbatim}
intros m; induction i; intros p1 p2 p1p2 vc1.
\end{verbatim}
The proof then has four cases, which are solved in about 10 lines of
proof script.

\section{A first simple abstract interpreter}
We shall now define two abstract interpreters, which run instructions
symbolically, updating an abstract state at each step.  The abstract
state is then transformed into a logical expression which is added
to the instructions, thus producing an annotated instruction.  The
abstract state is also returned at the end of execution, in one of two
forms.  In the first simple abstract interpreter, the final abstract
state is simply returned.  In the seccond abstract interpreter, only
an optional abstract state will be returned, a {\tt None} value being
used when the abstract interpreter can detect that the program can
never terminate: the second abstract interpreter will also perform
dead code detection.

For example, if we give our abstract interpreter an input state stating
that {\tt x} is even and {\tt y} is odd and the instruction
{\tt x:= x+y; y:=y+1}, the resulting value will be:
\begin{verbatim}
({even x /\ odd y} x:=x+y; {odd x /\ odd y} y:= y+1, 
  (x, odd)::(y,even)::nil)
\end{verbatim}

We suppose there exists a data-type \(A\) whose elements will
represent abstract values on which instructions are supposed to
compute.  For instance, the data-type {\tt A} could be the type
containing three values {\tt even}, {\tt odd}, and {\tt top}.
Another traditional example of abstract data-type is the type
of intervals, that are either of the form \([m,n]\),
with \(m\leq n\), \([-\infty,n]\), \([m,+\infty]\), or \([-\infty,+\infty]\).

The data-type of abstract values should come with a few elements and functions,
which we will describe progresssively.
\subsection{Using Galois connections}
Abstract values represent specific sets of concrete values.  There
is a natural order on sets : set inclusion.  Similarly, we
can consider an order on abstract values, which mimics the order
between the sets they represent.  The traditional
approach to describe this correspondance between the order on
sets of values and the order on abstract values is to consider that
the type of abstract values is given with a pair of functions \(\alpha\)
and \(\gamma\), where \(\alpha : {\cal P}(\mathbb{Z})\rightarrow A\) and
\(\gamma : A \rightarrow {\cal P}(\mathbb{Z})\).  The function \(\gamma\)
maps any abstract value to the set of concrete values it represents.
The function
\(\alpha\) maps any set of concrete values to the smallest abstract
value whose interpretation as a set contains the input.  Written
in a mathematical formula where \(\sqsubseteq\) denotes the order on
abstract values, the two functions and the orders on
sets of concrete values and on abstract values are related by the following
statement:
\[\forall a\in A, \forall b \in{\cal P}(\mathbb{Z}). b \subset \gamma(a)
\Leftrightarrow \alpha(b) \sqsubseteq a.\]
When the functions \(\alpha\) and \(\gamma\) are given with this property,
one says that there is a {\em Galois connection}.

In our study of abstract interpretation, the functions \(\alpha\) and
\(\gamma\) do not appear explicitly.  In a sense, \(\gamma\) will
be represented by a function {\tt to\_pred} mapping abstract values
to assertions depending on arithmetic expressions.  However, it is useful
to keep these functions in mind when trying to figure out what properties
are expected for the various components of our abstract interpreters, as
we will see in the next section.

\subsection{Abstract evaluation of arithmetic expressions}
Arithmetic expressions contain integer constants and additions, neither of
which are concerned with the data-type of abstract values.  To be able
to associate an abstract value to an arithmetic expression, we need
to find ways to establish a correspondance between concrete values
and abstract values.  This is done by supposing the existence of two
functions and a constant, which are the first three values axiomatized
for the data-type of abstract values (but there will be more later):
\begin{itemize}
\item {\tt from\_Z : Z -> A}, this
is used to associate a relevant abstract value
to any concrete value,
\item {\tt a\_add : A -> A -> A}, this is used to add two abstract values,
\item {\tt top : A}, this is used to represent the abstract value that
carries no information.
\end{itemize}

In terms of Galois connections, the function {\tt from\_Z} corresponds
to the function \(\alpha\), when applied to singletons.  The function
{\tt a\_add} must be designed in such a way that the following property
is satisfied:
\[\forall v_1~v_2, \{x + y | x\in(\gamma(v_1), y\in(\gamma(v_2))\}
\subset \gamma({\tt a\_add}~v_1~v_2).\]
With this constraint, a function that maps any pairs of abstract values
to {\tt top} would be acceptable, however it would be useless.  It is
better if {\tt a\_add \(v_1\) \(v_2\)} is the least satisfactory 
abstract value such that the above property is satisfied.

The value {\tt top} is the maximal element of \(A\), the
image of the whole \(\mathbb{Z}\) by the function \(\alpha\).

\subsection{Handling abstract states}
\label{lookup-def}
When computing the value of a variable, we suppose that this value is given
by looking up in a state, which actually is a list of pairs of variables
and abstract values.
\begin{verbatim}
Definition state := list(string*A).

Fixpoint lookup (s:state) (x:string) : A :=
  match s with
    nil => top
  | (y,v)::tl => if string_dec x y then v else lookup tl x
  end.
\end{verbatim}
As we see in the definition of {\tt lookup}, when a value is not
defined in a state, the function behaves
as if it was defined with {\tt top} as abstract value.  The computation
of abstract values for arithmetic expressions is then described by
the following function.

\begin{verbatim}
Fixpoint a_af (s:state)(e:aexpr) : A :=
  match e with
    avar x => lookup s x
  | anum n => from_Z n
  | aplus e1 e2 => a_add (a_af s e1) (a_af s e2)
  end.
\end{verbatim}

When executing assignments abstractly, we are also supposed to
modify the state.  If the
state contained no previous information about the assigned variable,
a new pair is created.  Otherwise, the first existing pair must be updated.
This is done with the following function.

\begin{verbatim}
Fixpoint a_upd(x:string)(v:A)(l:state) : state :=
  match l with
    nil => (x,v)::nil
  | (y,v')::tl =>
    if string_dec x y then (y, v)::tl else (y,v')::a_upd x v tl
  end.
\end{verbatim}
Later in this paper, we define a function that generates assertions
from states.  For this purpose, it is better to update by modifying
existing pairs of a variable and a value rather than just inserting
the new pair in front.
\subsection{The interpreter's main function}
When computing abstract interpretation on instructions we want to produce
a final abstract state and an annotated instruction.  We will need a way
to transform an abstract value into an assertion.  This is given by a function
with the following type:
\begin{itemize}
\item {\tt to\_pred : A -> aexpr -> assert} this is used to express that
that the value of the arithmetic expression in a given valuation will
belong to the set of concrete values represented by the given abstract
value.  So {\tt to\_pred} is axiomatized in the same sense as
{\tt from\_Z}, {\tt a\_add}, {\tt top}.
\end{itemize}
Relying on the existence of {\tt to\_pred},
we can define a function that maps states to
assertions:
\begin{verbatim}
Fixpoint s_to_a (s:state) : assert :=
  match s with
    nil => a_true
  | (x,a)::tl => a_conj (to_pred a (avar x)) (s_to_a tl)
  end.
\end{verbatim}
This function is implemented in a manner that all pairs present in
the state are transformed into assertions.  For this reason, it is
important that {\tt a\_upd} works by modifying existing pairs rather
than hiding them.

Our first simple abstract interpreter only implements
a trivial behavior for while loops.  Basically, this says that no information
can be gathered for while loops (the result is {\tt nil}, and the while loop's
invariant is also {\tt nil}).
\begin{verbatim}
Fixpoint ab1 (i:instr)(s:state) : a_instr*state :=
  match i with
    assign x e =>
    (pre (s_to_a s) (a_assign x e), a_upd x (a_af s e) s)
  | seq i1 i2 => 
    let (a_i1, s1) := ab1 i1 s in
    let (a_i2, s2) := ab1 i2 s1 in
      (a_seq a_i1 a_i2, s2)
  | while b i =>
    let (a_i, _) := ab1 i nil in
        (a_while b (s_to_a nil) a_i, nil)
  end.
\end{verbatim}
In this function, we see that the abstract interpretation of sequences
is simply described as composing the effect on states and recombining
the instruction obtained from each component of the sequence.
\subsection{Expected properties for abstract values}
To prove the correctness of the abstract interpreter, we 
need to know that the various functions and values provided around the
type {\tt A} satisfy a collection of properties.  These are gathered as
a set of hypotheses.

One value that we have not talked about yet is the mapping from
predicate names to actual predicates on integers, which is necessary
to interpret the assertions generated by {\tt to\_pred}.  This is given
axiomatically, like {\tt top} and the others:
\begin{itemize}
\item {\tt m : string -> list Z -> Prop}, maps all predicate names
used in {\tt to\_pred} to actual predicates on integers.
\end{itemize}

The first hypothesis expresses that {\tt top} brings no information.
\begin{verbatim}
Hypothesis top_sem : forall e,  (to_pred top e) = a_true.
\end{verbatim}

The next two hypotheses express that
the predicates associated to each abstract value
are {\em parametric} with respect to the arithmetic expression they receive.
Their truth does not depend on the exact shape of the expressions, but only
on the concrete value such an arithmetic expression may take in the current
valuation.  Similarly, substitution basically affects the arithmetic expression
part of the predicate, not the part that depends on the abstract value.
\begin{verbatim}
Hypothesis to_pred_sem :
  forall g v e, ia m g (to_pred v e) =
     ia m g (to_pred v (anum (af g e))).
Hypothesis subst_to_pred :
  forall v x e e', subst x e' (to_pred v e) =
     to_pred v (asubst x e' e).
\end{verbatim}

For instance, if the abstract values are intervals, it is natural that
the {\tt to\_pred} function will map an abstract value {\tt [3,10]} and
an arithmetic expression {\tt e} to an assertion {\tt between(3, e, 10)}.
When evaluating this assertion with respect to a given valuation {\tt g},
the integers {\tt 3} and {\tt 10} will not be affected by {\tt g}.  Similarly,
substitution will not affect these integers.

The last two hypotheses express that the interpretation of the
associated predicates for abstract values obtained through 
{\tt from\_Z} and {\tt a\_add} are consistent with the concrete values
computed for immediate integers and additions.  The hypothesis
{\tt from\_Z\_sem} actually establishes the correspondence between
{\tt from\_Z} and the abstraction function \(\alpha\) of a Galois
connection.  The hypothesis {\tt a\_add\_sem} expresses the
condition which we described informally when introducing the function
{\tt a\_add\_sem}.
\begin{verbatim}
Hypothesis from_Z_sem :
  forall g x, ia m g (to_pred (from_Z x) (anum x)).
Hypothesis a_add_sem : forall g v1 v2 x1 x2,
  ia m g (to_pred v1 (anum x1)) ->
  ia m g (to_pred v2 (anum x2)) ->
  ia m g (to_pred (a_add v1 v2) (anum (x1+x2))).
\end{verbatim}

\subsection{Avoiding duplicates in states}
The way {\tt s\_to\_a} and {\tt a\_upd} are defined is not consistent:
{\tt s\_to\_a} maps every pair occuring in a state to an assertion
fragment, while {\tt a\_upd} only modifies the first pair occuring
in the state.

For instance, when the abstract interpretation computes with intervals,
{\tt s} is {\tt ("x", [1,1])::("x",[1,1])::nil}, and the
instruction is {\tt x := x + 1}, the resulting state is
{\tt ("x",[2,2])::("x",[1,1])::nil} and the resulting annotated instruction
is {\tt \{ \(1\leq {\tt x} \leq 1\wedge 1\leq {\tt x} \leq 1\)\} x:= x+1}.
The post-condition corresponding to the resulting state is
\(2\leq {\tt x} \leq 2\wedge 1\leq{\tt x} \leq 1\).  It is contradictory
and cannot be satisfied when executing from valuations satisfying the
pre-condition, which is not contradictory.

To cope with this difficulty, we need to express that the abstract
interpreter works correctly only with
states that contain no duplicates.
We formalize this with a predicate {\tt consistent}, which is
defined as follows:
\begin{verbatim}
Fixpoint mem (s:string)(l:list string): bool :=
 match l with
   nil => false
 | x::l => if string_dec x s then true else mem s l
 end.

Fixpoint no_dups (s:state)(l:list string) :bool :=
 match s with
   nil => true
 | (s,_)::tl => if mem s l then false else no_dups tl (s::l)
 end.

Definition consistent (s:state) := no_dups s nil = true.
\end{verbatim}
The function {\tt no\_dups} actually returns {\tt true} when
the state {\tt s} contains no duplicates and no element from the 
exclusion list {\tt l}.  We prove, by induction on the of structure of
{\tt s}, that updating a state that satisfies {\tt no\_dups} for
an exclusion list {\tt l}, using {\tt a\_upd} for a variable {\tt x}
outside the exclusion list returns a new state that
still satisfies {\tt no\_dups} for {\tt l}.
The statement is as follows:
\label{th:no_dups_update}
\begin{verbatim}
Lemma no_dups_update :
  forall s l x v, mem x l = false ->
   no_dups s l = true -> no_dups (a_upd x v s) l = true.
\end{verbatim}
The proof of this lemma is done by induction on {\tt s}, making sure
that the property that is established for every {\tt s} is universally
quantified over {\tt l}: the induction hypothesis is actually used for
a different value of the the exclusion list.

The corollary from this lemma corresponding to the case where
{\tt l} is instantiated with the empty list
expresses that {\tt a\_upd} preserves the {\tt consistent} property.
\begin{verbatim}
Lemma consistent_update :
  forall s x v, consistent s -> consistent (a_upd x v s).
\end{verbatim}
\subsection{Proving the correctness of this interpreter}
When the interpreter runs on an instruction \(i\) and a state 
\(s\) and returns an annotated instruction \(i'\) and a new state \(s'\),
the correctness of the run is expressed with three properties:
\begin{itemize}
\item The assertion {\tt s\_to\_a \(s\)} is stronger than the
pre-condition\\ {\tt pc \(i'\) (s\_to\_a \(s'\))},
\item All the verification conditions in {\tt vc \(i'\) (s\_to\_a \(s'\))}
are valid,
\item The annotated instruction \(i'\) is an annotated version of the
input \(i\).
\end{itemize}
In the next few sections, we will prove that all runs of the abstract
 interpreter are correct.
\subsection{Soundness of abstract evaluation for expressions}
 When an expression \(e\) evaluates abstractly to an abstract value \(a\)
and concretely to an integer \(z\), \(z\) should satisfy
the predicate associated to the value \(a\).  Of course, the evaluation of
\(e\) can only be done using a valuation that takes care of providing values
for all variables occuring in \(e\).  This valuation must be consistent with
the abstract state that is used for the abstract evaluation leading to \(a\).
The fact that a valuation is consistent with an abstract state is simply
expressed by saying that the interpretation of the corresponding assertion
for this valuation has to hold.  Thus, the soundness of abstract evaluation
is expressed with a lemma that has the following shape:
\begin{verbatim}
Lemma a_af_sound : 
  forall s g e, ia m g (s_to_a s) ->
    ia m g (to_pred (a_af s e) (anum (af g e))).
\end{verbatim}
This lemma is proved by induction on the expression {\tt e}.  The case
where {\tt e} is a number is a direct application of the hypothesis
{\tt from\_Z\_sem}, the case where {\tt e} is an addition is a consequence
of {\tt a\_add\_sem}, combined with induction hypotheses.  The case
where {\tt e} is a variable relies on another lemma:
\begin{verbatim}
Lemma lookup_sem : forall s g, ia m g (s_to_a s) -> 
  forall x, ia m g (to_pred (lookup s x) (anum (g x))).
\end{verbatim}
This other lemma is proved by induction on {\tt s}.  In the base case,
{\tt s} is empty, {\tt lookup s x} is {\tt top}, and the hypothesis
{\tt top\_sem} makes it possible to conclude; in
the step case, if {\tt s} is {\tt (y,v)::s'} then the hypothesis 
\begin{verbatim}
ia m g (s_to_a s)
\end{verbatim}
reduces to
\begin{verbatim}
to_pred v (avar y) /\ ia m g (s_to_a s')
\end{verbatim}
We reason by cases on whether {\tt x} is {\tt y} or not.  If {\tt x} is
equal to {\tt y} then {\tt to\_pred v (avar y)} is the same
as {\tt to\_pred v (anum (g x))} according to
{\tt to\_pred\_sem} and {\tt lookup s x} is the same as {\tt v} by definition
of {\tt lookup}, this is enough to conclude this case.  If {\tt x} and
{\tt y} are different, we use the induction hypothesis on {\tt s'}.
\subsection{Soundness of update}
In the weakest pre-condition calculus, assignments of the
form {\tt x := e} are taken care of by
substituting all occurrences of the assigned variable {\tt x}
with the arithmetic expression {\tt e}
in the post-condition to obtain the weakest pre-condition.  In
the abstract interpreter, assignment is taken care of by updating the
first instance of the variable in the state.  There is a discrepancy between
the two approaches, where the first approach
 acts on all instances of the variable and the
second approach acts only on the first one.  This discrepancy is resolved in the
conditions of our experiment, where we work with abstract states that
contain only one binding for each variable: in this case, updating the
first variable is the same as updating all variables.  We express this
with the following lemmas:
\begin{verbatim}
Lemma subst_no_occur :
  forall s x l e,
   no_dups s (x::l) = true -> subst x e (s_to_a s) = (s_to_a s).
\end{verbatim}
\begin{verbatim}
Lemma subst_consistent :
   forall s g v x e, consistent s -> ia m g (s_to_a s) ->
     ia m g (to_pred v (anum (af g e))) ->
     ia m g (subst x e (s_to_a (a_upd x v s))).
\end{verbatim}
Both lemmas are proved by induction on {\tt s} and the second one uses
the first in the case where the substituted variable {\tt x} is
the first variable occuring in {\tt s}.  This proof also relies
on the hypothesis {\tt subst\_to\_pred}.

\subsection{Relating input abstract states and pre-conditions}
For the correctness proof we consider runs starting from
an instruction {\tt i} and an initial
 abstract state {\tt s} and obtaining an annotated instruction {\tt i'} and
a final abstract state {\tt s'}.  We
are then concerned with the verification conditions and the pre-condition
generated for the post-condition corresponding to {\tt s'} and the annotated
instruction {\tt i'}.  The pre-condition we obtain is either the assertion
corresponding to {\tt s} or the assertion {\tt a\_true}, when the first
sub-instruction in {\tt i} is a while loop.  In all cases,
the assertion corresponding to {\tt s} is stronger than the pre-condition.
This is expressed with the following lemma, which is easily proved by
induction on {\tt i}.
\begin{verbatim}
Lemma ab1_pc :
  forall  i i' s s', ab1 i s = (i', s') ->
    forall g a, ia m g (s_to_a s) -> ia m g (pc i' a).
\end{verbatim}
This lemma is actually stronger than needed, because the post-condition
used for computing the pre-condition does not matter, since the resulting
annotated instruction is heavily annotated with assertions
and the pre-condition always comes
from one of the annoations.

\subsection{Validity of generated conditions}
The main correctness statement only concerns states that satisfy
the {\tt consistent} predicate, that is, states that contain at most one
entry for each variable.  The statement is proved by induction on instructions.
As is often the case, what we prove by induction is a stronger statement;
Such a stronger statement
also means stronger induction hypotheses.  Here we add the information that
the resulting state is also consistent.

\begin{theorem} If \(s\) is a consistent state and running
the abstract interpreter {\tt ab1} on \(i\)
 from \(s\) returns a new annotated instruction \(i'\) and afinal state \(s'\),
then all the verification conditions
generated for \(i'\) and the post-condition associated to \(s'\) are
valid.  Moreover, the state \(s'\) is consistent.
\end{theorem}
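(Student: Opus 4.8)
The plan is to prove the statement by structural induction on the instruction \(i\), keeping the initial state \(s\) universally quantified so that the induction hypotheses can later be applied to the intermediate states produced during a run. As the statement already suggests, I would carry both conclusions---validity of the generated conditions and consistency of the output state---together in a single bundled invariant, because the consistency of an intermediate state is exactly the side condition that unlocks the induction hypothesis in the sequence and loop cases. Throughout, a given predicate interpretation {\tt m} is fixed, and I rely on the earlier lemmas {\tt a\_af\_sound}, {\tt subst\_consistent}, {\tt consistent\_update}, {\tt ab1\_pc}, and the monotonicity theorem {\tt vc\_monotonic}.

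For the assignment case \(i = {\tt assign}~x~e\), the interpreter returns \(i' = {\tt pre}~({\tt s\_to\_a}~s)~({\tt a\_assign}~x~e)\) and \(s' = {\tt a\_upd}~x~({\tt a\_af}~s~e)~s\). Unfolding {\tt vc} on \(i'\) leaves a single obligation, \({\tt s\_to\_a}~s \Rightarrow {\tt subst}~x~e~({\tt s\_to\_a}~s')\), since the bare {\tt a\_assign} contributes no conditions. To discharge it I would fix a valuation \(g\) with \({\tt ia}~m~g~({\tt s\_to\_a}~s)\), obtain \({\tt ia}~m~g~({\tt to\_pred}~({\tt a\_af}~s~e)~({\tt anum}~({\tt af}~g~e)))\) from {\tt a\_af\_sound}, and then apply {\tt subst\_consistent} (which consumes the hypothesis {\tt consistent}~\(s\)) to conclude. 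Consistency of \(s'\) is then precisely {\tt consistent\_update}.

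The sequence case \(i = {\tt seq}~i_1~i_2\) is where the real work lies, and I expect it to be the main obstacle. Writing \(({\tt a\_i1},s_1)\) and \(({\tt a\_i2},s_2)\) for the two recursive runs, the first induction hypothesis (applicable since \(s\) is consistent) yields validity of \({\tt vc}~{\tt a\_i1}~({\tt s\_to\_a}~s_1)\) and consistency of \(s_1\); the second hypothesis (applicable since \(s_1\) is consistent) yields validity of \({\tt vc}~{\tt a\_i2}~({\tt s\_to\_a}~s_2)\) and consistency of \(s_2 = s'\). The subtlety is that {\tt vc (a\_seq a\_i1 a\_i2) (s\_to\_a s2)} unfolds to {\tt vc a\_i1 (pc a\_i2 (s\_to\_a s2)) ++ vc a\_i2 (s\_to\_a s2)}, so the post-condition threaded into the left component is \({\tt pc}~{\tt a\_i2}~({\tt s\_to\_a}~s_2)\), \emph{not} \({\tt s\_to\_a}~s_1\). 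The bridge is {\tt ab1\_pc} applied to the run of \(i_2\): it shows that \({\tt s\_to\_a}~s_1\) is stronger than \({\tt pc}~{\tt a\_i2}~({\tt s\_to\_a}~s_2)\). Feeding this implication together with the first induction hypothesis into {\tt vc\_monotonic} transports validity of \({\tt vc}~{\tt a\_i1}~({\tt s\_to\_a}~s_1)\) onto \({\tt vc}~{\tt a\_i1}~({\tt pc}~{\tt a\_i2}~({\tt s\_to\_a}~s_2))\). A small auxiliary fact that {\tt valid} respects list concatenation then recombines the two halves, and \(s' = s_2\) is consistent by the second hypothesis.

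For the loop \(i = {\tt while}~b~i_0\), the interpreter chooses the trivial invariant \({\tt s\_to\_a}~{\tt nil} = {\tt a\_true}\) and returns \(s' = {\tt nil}\), so consistency of \(s'\) is immediate. The {\tt vc} of the annotated loop produces three obligations. The one guarded by the negated test has post-condition {\tt a\_true}, hence is trivially valid. The body-preservation condition \(({\tt a\_true} \wedge b) \Rightarrow {\tt pc}~{\tt a\_i}~{\tt a\_true}\) follows from {\tt ab1\_pc} on the run \({\tt ab1}~i_0~{\tt nil}\), whose premise \({\tt ia}~m~g~({\tt s\_to\_a}~{\tt nil})\) holds vacuously. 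Finally, for \({\tt vc}~{\tt a\_i}~{\tt a\_true}\) I would invoke the induction hypothesis on \(i_0\) from the consistent state {\tt nil}, obtaining validity of \({\tt vc}~{\tt a\_i}~({\tt s\_to\_a}~s_f)\) for the final state \(s_f\) of that inner run, and then apply {\tt vc\_monotonic} in its degenerate direction (everything implies {\tt a\_true}) to descend to \({\tt vc}~{\tt a\_i}~{\tt a\_true}\). Thus the loop reuses exactly the two tools---{\tt ab1\_pc} and {\tt vc\_monotonic}---that carry the harder sequence case, only in their trivialised forms.
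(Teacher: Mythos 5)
Your proposal is correct and follows essentially the same route as the paper: structural induction on \(i\) with the consistency of the output state bundled into the induction statement, the assignment case discharged by {\tt a\_af\_sound}, {\tt subst\_consistent} and {\tt consistent\_update}, and the sequence case bridged by {\tt ab1\_pc} plus {\tt vc\_monotonic} exactly as in the paper. Your treatment of the while loop is in fact more explicit than the paper's terse remark that the conditions ``all conclude to {\tt a\_true}'' (which, taken literally, is inaccurate for the conditions inside {\tt vc \(a\_i\) a\_true}); your use of the induction hypothesis on the body together with {\tt vc\_monotonic} and {\tt ab1\_pc} is what the formal proof actually requires.
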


The Coq encoding of this theorem is as follows:
\begin{verbatim}
Theorem ab1_correct : forall i i' s s', 
  consistent s -> ab1 i s = (i', s') ->
   valid m (vc i' (s_to_a s')) /\ consistent s'.
\end{verbatim}
This statement is proved by induction on {\tt i}.  Three cases arise,
corresponding to the three instructions in the language.

\begin{enumerate}
\item When {\tt i} is an assignment {\tt x := e}, this is the base case.
{\tt ab1 i s} computes to 
\begin{verbatim}
(pre (s_to_a s) (a_assign x e), a_upd x (a_af s e) s)
\end{verbatim}
From the lemma {\tt a\_af\_sound} we obtain that the concrete value
of {\tt e} in any valuation {\tt g} that satisfies {\tt ia m g (s\_to\_a s)}
satisfies the following property:
\begin{verbatim}
ia m g (to_pred (a_af s e) (anum (af g e)))
\end{verbatim}
The lemma {\tt subst\_consistent} can then be used to obtain the validity
of the following condition.
\begin{verbatim}
imp (s_to_a s) (subst x e (s_to_a (a_upd x (a_af s e) s)))
\end{verbatim}
This is the single verification condition generated for this instruction.
The
second part is taken care of by {\tt consistent\_update}.
\item When the instruction \hbox{\tt i} is a sequence \hbox{\tt seq i1 i2},
the abstract interpreter first processes \hbox{\tt i1} with the state \hbox{\tt s} as
input to obtain an annotated instruction \hbox{\tt a\_i1} and
an output state \hbox{\tt s1},  it then processes \hbox{\tt i2} with
\hbox{\tt s1} as input to obtain an annotated
instruction \hbox{\tt a\_i2} and a state \hbox{\tt s2}. 
The state \hbox{\tt s2} is used as the output state for the
whole instruction.  We then need to verify that the conditions generated
for \hbox{\tt a\_seq a\_i1 a\_i2} using \hbox{\tt s\_to\_a a2} as post-condition are
valid and \hbox{\tt s2} satisfies the \hbox{\tt consistent} property.
The conditions can be split in two parts.  The second part is
\hbox{\tt vc a\_i2 (s\_to\_a a2)}.  the validity of these conditions is a direct
consequence of the induction hypotheses.  The first part is
\hbox{\tt vc a\_i1 (pc a\_i2 (s\_to\_a s2))}.  This is not a direct consequence
of the induction hypothesis, which only states \hbox{\tt vc a\_i1 (s\_to\_a s1)}.
However, the lemma \hbox{\tt ab1\_pc} applied on \hbox{\tt a\_i2} states that
\hbox{\tt s\_to\_a s1} is stronger than \hbox{\tt pc (s\_to\_a s2)} and the lemma
\hbox{\tt vc\_monotonic} makes it possible to conclude.   With respect to
the \hbox{\tt consistent} property, it is recursively transmitted from \hbox{\tt s}
to \hbox{\tt s1} and from \hbox{\tt s1} to \hbox{\tt s2}.
\item When the instruction is a while loop, the body of the loop is recursively
processed with the \hbox{\tt nil} state, which is always satisfied.  Thus, the
verification conditions all conclude to \hbox{\tt a\_true} which is trivially
true.  Also, the \hbox{\tt nil} state also trivially satisfies the
\hbox{\tt consistent} property.
\end{enumerate}
\subsection{The annotated instruction}
We also need to prove that the produced annotated instruction
really is an annotated version of the initial instruction.
To state this new lemma, we first define a simple function that forgets the
annotations in an annotated instruction:
\begin{verbatim}
Fixpoint cleanup (i: a_instr) : instr :=
  match i with
    pre a i => cleanup i
  | a_assign x e => assign x e
  | a_seq i1 i2 => seq (cleanup i1) (cleanup i2)
  | a_while b a i => while b (cleanup i)
  end.
\end{verbatim}
We then prove a simple lemma about the abstract interpreter and
this function.
\begin{verbatim}
Theorem ab1_clean : forall i i' s s',
   ab1 i s = (i', s') -> cleanup i' = i.
\end{verbatim}
The proof of this lemma is done by induction on the structure of \hbox{\tt i}.
\subsection{Instantiating the simple abstract interpreter}
We can instantiate this simple abstract interpreter on a data-type of
odd-even values, using the following inductive type and functions:
\begin{verbatim}
Inductive oe : Type := even | odd | oe_top.

Definition oe_from_Z (n:Z) : oe :=
  if Z_eq_dec (Zmod n 2) 0 then even else odd.

Definition oe_add (v1 v2:oe) : oe :=
  match v1,v2 with
    odd, odd => even
  | even, even => even
  | odd, even => odd
  | even, odd => odd
  | _, _ => oe_top
  end.
\end{verbatim}
The abstract values can then be mapped into assertions in the obvious way
using a function {\tt oe\_pred} which we do not describe here for the
sake of conciseness.
Running this simple interpreter on a small example, representing the program
\begin{verbatim}
 x := x + y; y := y + 1
\end{verbatim}
for the state {\tt ("x", odd)::("y", even)::nil} is represented by
the following dialog:
\begin{alltt}
Definition ab1oe := ab1 oe oe_from_Z oe_top oe_add oe_to_pred.

Eval vm_compute in
 ab1oe (seq (assign "x" (aplus (avar "x") (avar "y")))
            (assign "y" (aplus (avar "y") (anum 1))))
  (("x",even)::("y",odd)::nil).
\textit{    = (a_seq
        (pre
          (a_conj (pred "even" (avar "x" :: nil))
             (a_conj (pred "odd" (avar "y" :: nil)) a_true))
          (a_assign "x" (aplus (avar "x") (avar "y"))))
        (pre
          (a_conj (pred "odd" (avar "x" :: nil))
             (a_conj (pred "odd" (avar "y" :: nil)) a_true))
          (a_assign "y" (aplus (avar "y") (anum 1)))),
       ("x", odd) :: ("y", even) :: nil)
     : a_instr * state oe}
\end{alltt}
\section{A stronger interpreter}
More precise results can be obtained for while loops. For each loop
we need to find a state whose interpretation as an assertion will be an
acceptable invariant for the loop.  We want this invariant to take into
account any information that can be extracted from the boolean test in the
loop: when entering inside the loop, we know that the
test succeeded; when exiting the loop we know that the
test failed.  It turns out that this information can help us detect cases
where the body of a loop is never executed and cases where a loop can
never terminate.  To describe non-termination, we change the type of
values returned by the abstract interpreter: instead of returning an annotated
instruction and a state, our new abstract interpreter returns an
annotated instruction and an optional state: the optional value is {\tt None}
when we have detected that execution cannot terminate.  This detection
of guaranteed non-termination is conservative: when the analyser cannot
guarantee that an instruction loops, it returns a state as usual.  The presence
of optional states will slightly complexify the structure of our
static analysis.

We assume the existence of two new functions for this purpose.
\begin{itemize}
\item {\tt learn\_from\_success : state -> bexpr -> option state}, this is used
to encode the information learned when the test succeeded.
For instance if the environment initially contains
an interval {\tt [0,10]} for the variable {\tt x} and the test is {\tt x < 6},
then we can return the environment so that the value for {\tt x} becomes
{\tt [0, 5]}.  Sometimes, the initial environment is so that the test can
never be satisfied, in this case a value {\tt None} is returned instead of
an environment.
\item {\tt learn\_from\_failure : state -> bexpr -> option state},
this is used to compute
information about a state knowing that a test failed.
\end{itemize}
The body of a while loop is often meant to be run several times.  In
abstract interpretation, this is also true.  At every run, the information
about each variable at each location of the instruction needs to be updated to
take into account more and more concrete values that may be reached at this
location.  In traditional approaches to abstract interpretation, a binary
operation is applied at each location, to combine the information previously
known at this location and the new values discovered in the current run.
This is modeled by a binary operation.
\begin{itemize}
\item {\tt join : A -> A -> A}, this function takes two abstract values
and returns a new abstract value whose interpretation as a set is
larger than the two inputs.
\end{itemize}
The theoretical description of abstract interpretation insists that
the set {\tt A}, together with the values {\tt join} and {\tt top}
should constitute an upper semi-lattice.  In fact, We will use only part of
the properties of such a structure in our proofs about the abstract
interpreter.

When the functions {\tt learn\_from\_success} and {\tt learn\_from\_failure}
return a {\tt None} value, we actually detect that some code will never
be executed.  For instance, if {\tt learn\_from\_success} returns {\tt None},
we can know that the test at the entry of a loop will never be satisfied
and we can conclude that the body of the loop is not executed.  In this
condition, we can mark this loop body with a false assertion.  We provide a
function for this purpose:
\begin{verbatim}
Fixpoint mark (i:instr) : a_instr :=
  match i with
    assign x e => pre a_false (a_assign x e)
  | seq i1 i2 => a_seq (mark i1) (mark i2)
  | while b i => a_while b a_false (mark i)
  end.
\end{verbatim}
Because it marks almost every instruction, this function makes it easy
to recognize at first glance the fragments of code that are dead code.
A more lightweight approach could be to mark only the sub-instructions
for which an annotation is mandatory: while loops.

\subsection{Main structure of invariant search}
\label{inv-search-structure}
In general, finding the most precise invariant for a while loop is
an undecidable problem.  Here we are describing a static analysis tool.
We will trade preciseness for guaranteed termination.  The approach
we will describe will be as follows:
\begin{enumerate}
\item Run the body of the loop abstractly for a few times, progressively
widening the sets of values for each variable at each run.  If this process
stabilizes, we have reached an invariant,
\item If no invariant was reached, try taking over-approximations of the
values for some variables and run again the loop for a few times.  This
process may also reach an invariant,
\item If no invariant was reached by progressive widening, pick an abstract
state that is guaranteed to be an invariant (as we did for the first
simple interpreter: take the {\sl top} state that gives no information about any
variable),
\item Invariants that were obtained by over-approximation
can then be improved by a {\em narrowing} process:
when run through the loop again, even if no information about the state
is given at the beginning of the loop, we may still be able to gather
some information at the end of executing the loop.  The state
that gathers the information at the end of the loop and the information
before entering the loop is most likely to be an invariant, which is
more precise (narrower) than the top state.  Again this process may be
run several times.
\end{enumerate}
We shall now review the operations involved in each of these steps.
\subsection{Joining states together}
Abstract states are finite list of pairs of variable names and
abstract values.  When a variable does not occur in a state, the
associated abstract value is {\tt top}.  When joining two states together
every variable that does not occur in one of the two states should receive
the {\tt top} value, and every variable that occurs in both states should
receive the join of the two values found in each state.
We describe this by writing a function that studies all the variables
that occur in one of the lists: it is guaranteed to perform the right behavior
for all the variables in both lists, it naturally associates the {\tt top}
value to the variables that do not occur in the first list (because no pair
is added for these variables), and it naturally associates the {\tt top}
value to the variables that do not occur in the second list, because
{\tt top} is the value found in the second list and {\tt join} preserves
{\tt top}.
\begin{verbatim}
Fixpoint join_state (s1 s2:state) : state :=
  match s1 with
    nil => nil
  | (x,v)::tl => a_upd x (join v (lookup s2 x)) (join_state tl s2)
  end.
\end{verbatim}
Because we sometimes detect that some instruction will not be executed
we occasionally have to consider situation were we are not given a state
after executing a while loop.  In this case, we have to combine together
a state and the absence of a state.  But because the absence of state
corresponds to a false assertion, the other state is enough to describe
the required invariant.  We encode this in an auxiliary function.
\begin{verbatim}
Definition join_state' (s: state)(s':option state) : state :=
   match s' with
     Some s' => join_state s s'
   | None => s
   end.
\end{verbatim}
\subsection{Running the body a few times}
In our general description of the abstract interpretation of loops, we
need to execute the body of loops in two different modes: one mode is
a {\em widening} mode the other is a {\em narrowing} mode.
In the narrowing mode, after executing the body
of the loop needs to be joined with the initial state before executing
the body of the loop, so that the result state is less precise than
both the state before executing the body of the loop and the
state after executing
the body of the loop.  In the {\em narrowing} mode, we start the execution with
an environment that is guaranteed to be large enough, hoping
to narrow this environment to a more precise value.
In this case, the {\tt join} operation must not be done with the state
that is used to start the execution, but with another state which
describes the information known about variables before considering the loop.
To accomodate these two modes of abstract execution, we use a function
that takes two states as input: the first state is the one with which the
result must be joined, the second state is the one with which execution
must start.  In this function, the argument {\tt ab} is the function
that describes the abstract interpretation on the instruction inside the
loop, the argument {\tt b} is the test of the loop.  The function {\tt ab}
returns an optional state and an annotated instruction.  The optional
state is {\tt None} when the abstract interpreter can detect that
the execution of the program from the input state will never terminate.
When putting all elements together, the argument {\tt ab} will be instantiated
with the recursive call of the abstract interpreter on the loop body.
\begin{verbatim}
Definition step1 (ab: state -> a_instr * option state)
  (b:bexpr) (init s:state) : state :=
  match learn_from_success s b with
    Some s1 => let (_, s2) := ab s1 in join_state' init s2
  | None => s
  end.
\end{verbatim}
We then construct a function that repeats {\tt step1} a certain
number of times.  This number is denoted by a natural number {\tt n}.
In this function, the constant 0 is a natural number and we need to
make it precise to Coq's parser, by expressing that the value must
be interpreted in a parsing scope for natural numbers instead of
integers, using the specifier {\tt \%nat}.
\begin{verbatim}
Fixpoint step2 (ab: state ->  a_instr * option state)
 (b:bexpr) (init s:state) (n:nat) : state :=
 match n with
   0%nat => s
 | S p => step2 ab b init (step1 ab b init s) p
 end.
\end{verbatim}
The complexity of these functions can be improved: there is no
need to compute all iterations if we can detect early that a fixed point
was reached.  In this paper, we prefer to keep the code of the
abstract interpreter simple but potentially inefficient to make our
formal verification work easier.

\subsection{Verifying that a state is more precise than another}
To verify that we have reached an invariant, we need to check for a state
{\tt s}, so that running this state through {\tt step1 ab b s s} returns
a new state that is not less precise than {\tt s}.  For this, we assume
that there exist a function that makes it possible to compare two
abstract values:
\begin{itemize}
\item {\tt thinner : A -> A -> bool}, this function returns {\tt true}
when the first abstract value gives more precise information than the
second one.
\end{itemize}
Using this basic function on abstract values, we define a new function
on states:
\begin{verbatim}
Fixpoint s_stable (s1 s2 : state) : bool :=
  match s1 with
    nil => true
  | (x,v)::tl => thinner (lookup s2 x) v && s_stable tl s2
  end.
\end{verbatim}
This function traverses the first state to check that the abstract
value associated to each variable is less precise than the information
found in the second state.  This function is then easily used to
verify that a given state is an invariant through the abstract interpretation
of a loop's test and body.
\begin{verbatim}
Definition is_inv (ab:state-> a_instr * option state)
  (s:state)(b:bexpr):bool := s_stable s (step1 ab b s s).
\end{verbatim}
\subsection{Narrowing a state}
The {\tt step2} function receives two arguments of type {\tt state}.  The first
argument is solely used for join operations, while the second argument
is used to start a sequence of abstract states that correspond to iterated
interpretations of the loop test and body.  When the start state is not
stable through interpretation, the resulting state is larger than both
the first argument and the start argument.  When the start state is stable
through interpretation, there are cases where the resulting state is
smaller than the start state.

For instance, in the cases where the abstract values are {\tt even}
and {\tt odd}, if the first state argument maps the variable {\tt y} to
{\tt even} and the variable {\tt z} to {\tt odd}, the start state maps
{\tt y} and {\tt z} to the top abstract value
(the abstract value that gives no information) and the while loop is the
following:
\begin{verbatim}
while (x < 10) do x := x + 1; z:= y + 1; y := 2 done
\end{verbatim}
Then, after abstractly executing the loop test and body once, we obtain a
state where {\tt y} has the value {\tt even} and {\tt z} has the top 
abstract value.  This state is more precise
than the start state.  After abstractly executing the loop test and body
a second time, we obtain a state where {\tt z} has the value {\tt odd}
and {\tt y} has the value {\tt even}.  This state is more precise than the one
obtained only after the first abstract run of the loop test and body.

The example above shows that over-approximations are improved by running
the abstract interpreter again on them.  This phenomenon is known as
{\em narrowing}.  It is worth forcing a narrowing phase after each phase
that is likely to produce an over-approximation of the smallest fixed-point
of the abstract interpreter.  This is used in the abstract interpreter
that we describe below.

\subsection{Allowing for over-approximations}
In general, the finite amount of abstract computation performed in 
the {\tt step2} function is not enough to reach the smallest 
stable abstract state.  This is related to the undecidability of the
halting problem: it is often possible to write a program where a variable
will receive a precise value exactly when some other program terminates.
If we were able to compute the abstract value for this variable in a finite
amount of time, we would be able to design a program that solves the halting
problem.

Even if we are facing a program where finding the smallest state can be
done in a finite amount of time, we may want to accelerate the process by
taking over-approximations.  For instance, if we consider the following loop:
\begin{verbatim}
while x < 10 do x := x + 1 done
\end{verbatim}
If the abstract values we are working with are intervals and we start with
the interval {\tt [0,0]}, after abstractly interpreting the loop test and
body once, we obtain that the value for {\tt x} should contain at least
{\tt [0,1]}, after abstractly interpreting 9 times, we obtain that the
value for {\tt x} should contain at least {\tt [0,9]}.  Until these 9
executions, we have not seen a stable state.  At the 10th execution, we
obtain that the value for {\tt x} should contain at least {\tt [0, 10]} and
the 11th execution shows that this value actually is stable.

At any time before a stable state is reached, we may choose to replace the
 current
unstable state with a state that is ``larger''.  For instance, we may
choose to replace {\tt [0,3]} with {\tt [0,100]}.  When this happens, the
abstract interpreter can discover that the resulting state after starting with
the one that maps {\tt x} to {\tt [0,100]} actually maps {\tt x} to
 {\tt [0,10]}, thus {\tt [0,100]} is stable and is good candidate to enter
a narrowing phase.  This narrowing phase actually converges to a state
that maps {\tt x} to {\tt [0,10]}.

The choice of over-approximations is arbitrary and information may actually
be lost in the process, because over-approximated states are less precise,
 but this is compensated by the fact that the abstract
interpreter gives quicker answers.  The termination of the abstract interpreter
can even be guaranteed if we impose that a guaranteed over-approximation
is taken after a finite amount of steps.  An example of a guaranteed
over-approximation is a state that maps every variable to the top abstract
value.  In our Coq encoding, such a state is represented by the
{\tt nil} value.

The choice of over-approximation strategies varies from one abstract domain
to the other.  In our Coq encoding, we chose to let this over-approximation
be represented by a function with the following signature:
\begin{itemize}
\item {\tt over\_approx : nat -> state -> state -> state}  When applied
to {\tt n}, {\tt s}, and {\tt s'}, this function computes an over approximation
of {\tt s'}.  The value {\tt s} is supposed to be a value that comes before
{\tt s'} in the abstract interpretation and can be used to choose the
over-approximation cleverly, as it gives a sense of direction to the
current evolution of successive abstract values.  The number {\tt n} should
be used to fine-tune the coarseness of the over-approximation: the lower
the value of {\tt n}, the coarser the approximation.
\end{itemize}
For instance, when considering the example above, knowing that
\(s={\tt [0,1]}\) and \(s'={\tt [0,2]}\) are
two successive unstable values reached
by the abstract interpreter for the variable {\tt x} can suggest to choose an
over-approximation where the upper bound changes but the lower bound remains
unchanged.  In this case, we expect the function {\tt over\_approx} to
return {\tt [0,\(+\infty\)]}, for example.

\subsection{The main invariant searching function}
We can now describe the function that performs the process described
in section~\ref{inv-search-structure}.  The code of this function is
as follows:
\begin{verbatim}
Fixpoint find_inv ab b init s i n : state :=
  let s' := step2 ab b init s (choose_1 s i) in
  if is_inv ab s' b then s' else
    match n with 
      0%nat => nil
    | S p => find_inv ab b init (over_approx p s s') i p
    end.
\end{verbatim}
The function {\tt choose\_1} is provided at the same time as
all other functions that are specific to the abstract domain {\tt A}, such
as {\tt join}, {\tt a\_add}, etc.

The argument function {\tt ab} is supposed to be the function that performs
the abstract interpretation of the loop inner instruction {\tt i} (also
called the loop body),
the boolean expression {\tt b} is supposed to be the loop test.  The
state {\tt init} is supposed to be the initial input state at the first
invocation of {\tt find\_inv} on this loop and {\tt s} is supposed to be
the current over-approximation of {\tt init}, {\tt n} is the number
of over-approximations that are still allowed before the function
should switch to the {\tt nil} state, which is a guaranteed
over-approximation.  This function systematically runs the abstract
interpreter on the inner instruction an arbitrary number of times
(given by the function {\tt choose\_1}) and then tests whether the resulting
state is an invariant.  Narrowing steps actually take place if the number
of iterations given by {\tt choose\_1} is large enough.  If the result
of the iterations is an invariant, then it is returned.  When the result
state is not an invariant, the function {\tt find\_inv} is called recursively
with a larger approximation computed by {\tt over\_approx}.  When the
number of allowed recursive calls is reached, the {\tt nil} value is returned.

\subsection{Annotating the loop body with abstract information}
The {\tt find\_inv} function only produces a state, while the abstract
interpreter is also supposed to produce an annotated version of the
instruction.  Once we know the invariant, we can annotate the while loop
with this invariant and obtain an annotated version of the loop body
by re-running the abstract interpreter on this instruction.  This is
done with the following function:
\begin{verbatim}
Definition do_annot (ab:state-> a_instr * option state)
  (b:bexpr) (s:state) (i:instr) : a_instr :=
  match learn_from_success s b with
    Some s' => let (ai, _) := ab s' in ai
  | None => mark i
  end.
\end{verbatim}
In this function, {\tt ab} is supposed to compute the abstract interpretation
of the loop body.  When the function {\tt learn\_from\_success} returns
a {\tt None} value, this means that the loop body is never executed and
it is marked as dead code by the function {\tt mark}.
\subsection{The abstract interpreter's main function}
With the function {\tt find\_inv}, we can now design a new abstract
interpreter.  Its main structure is about the same as for the naive
one, but there are two important differences.  First, the abstract
interpreter now uses the {\tt find\_inv} function to compute an
invariant state for the while loop.  Second, this abstract interpreter
can detect cases where instructions are guaranteed to not terminate.
This is a second part of dead code detection: when a good invariant is
detected for the while loop, a comparison between this invariant and
the loop test may give the information that the loop test can never be
falsified.  If this is the case, no state is returned and the
instructions following this while loop in sequences must be marked as
dead code.  This is handled by the fact that the abstract interpreter
now returns an optional state and an annotated instruction.  The case
for the sequence is modified to make sure instruction are marked as
dead code when receiving no input state.
\begin{verbatim}
Fixpoint ab2 (i:instr)(s:state) : a_instr*option state :=
  match i with
    assign x e =>
    (pre (s_to_a s) (a_assign x e), Some (a_upd x (a_af s e) s))
  | seq i1 i2 => 
    let (a_i1, s1) := ab2 i1 s in
      match s1 with
        Some s1' =>
        let (a_i2, s2) := ab2 i2 s1' in
          (a_seq a_i1 a_i2, s2)
      | None => (a_seq a_i1 (mark i2), None)
      end
  | while b i =>
    let inv := find_inv (ab2 i) b s s i (choose_2 s i) in
        (a_while b (s_to_a inv)
              (do_annot (ab2 i) b inv i),
         learn_from_failure inv b) 
  end.
\end{verbatim}
This function relies on an extra numeric function {\tt choose\_2} to decide
the number of times {\tt find\_inv} will attempt progressive
over-approximations before giving up and falling back on the {\tt nil}
state.  Like {\tt choose\_1} and {\tt over\_approx}, this function must
be provided at the same time as the type for abstract values.
\section{Proving the correctness of the abstract interpreter}
To prove the correctness of our abstract interpreter, we adapt the correctness
statements that we already used for the naive interpreter.  The main
change is that the resulting state is optional, with a {\tt None} value
corresponding to non-termination.  This means that when a {\tt None} value
is obtained we can take the post-condition as the false assertion.  This is
expressed with the following function, mapping an optional state to an
assertion.
\begin{verbatim}
Definition s_to_a' (s':option state) : assert :=
  match s' with Some s => s_to_a s | None => a_false end.
\end{verbatim}
The main correctness statement thus becomes the following one:
\begin{verbatim}
Theorem ab2_correct : forall i i' s s', consistent s ->
  ab2 i s = (i', s') -> valid m (vc i' (s_to_a' s')).
\end{verbatim}
By comparison with the similar theorem for {\tt ab1}, we removed
the part about the final state satisfying the {\tt consistent}.
This part is actually proved in a lemma beforehand.  The reason
why we chose to establish the two results at the same time for {\tt ab1}
and in two stages for {\tt ab2} is anecdotal.

As for the naive interpreter this theorem is paired with a lemma asserting
that cleaning up the resulting annotated instruction {\tt i'} yields back
the initial instruction {\tt i}.  We actually need to prove two lemmas,
one for the {\tt mark} function (used to mark code as dead code) and one
for {\tt ab2} itself.
\begin{verbatim}
Lemma mark_clean : forall i, cleanup (mark i) = i.
\end{verbatim}
\begin{verbatim}
Theorem ab2_clean : forall i i' s s',
   ab2 i s = (i', s') -> cleanup i' = i.
\end{verbatim}
These two lemmas are proved by induction on the structure of the instruction
{\tt i}.

\subsection{Hypotheses about the auxiliary functions}
The abstract interpreter relies on a collection of functions that are
specific to the abstract domain being handled.  In our Coq development,
this is handled by defining the function inside a section, where the
various components that are specific to the abstract domain of interpretation
are given as section variables and hypotheses.  When the section is closed,
the various functions defined in the section are abstracted over the variables
that they use.  Thus, the function {\tt ab2} becomes a 16-argument function.
The extra twelve arguments are as follows:
\begin{enumerate}
\item {\tt A : Type}, the type containing the abstract values,
\item {\tt from\_Z : Z -> A}, a function mapping integer values to abstract
values,
\item {\tt top : A}, an abstract value representing lack of information,
\item {\tt a\_add : A -> A -> A}, an addition operation for abstract values,
\item {\tt to\_pred : A -> aexpr -> assert}, a function mapping abstract
values to their interpretations as assertions on arithmetic expressions,
\item {\tt learn\_from\_success : state A -> bexpr -> state A}, a function
that is able to improve a state, knowing that a boolean expression's evaluation
returns {\tt true},
\item {\tt learn\_from\_failure : state A -> bexpr -> state A}, similar
to the previous one, but using the knowledge that the boolean expression's
evaluation returns {\tt false},
\item {\tt join : A -> A -> A}, a binary function on abstract values that
returns an abstract value that is coarser than the two inputs,
\item {\tt thinner : A -> A -> bool}, a comparison function that succeeds
when the first argument is more precise than the second,
\item {\tt over\_approx : nat -> state A -> state A -> state A}, a function
that implements heuristics to find over-approximations of its arguments,
\item {\tt choose\_1 : state A -> instr -> nat}, a function that returns
the number of times a loop body should be executed with a given start state
before testing for stabilisation,
\item {\tt choose\_2 : state A -> instr -> nat}, a function that returns
the number of times over-approximations should be attempted before giving
up and using the coarsest state.
\end{enumerate}
Most of these functions must satisfy a collection of properties to ensure
that the correctness statement will be provable.  There are fourteen
such properties,
which can be sorted in the following way:
\begin{enumerate}
\item Three properties are concerned with the assertions
created by {\tt to\_pred}, with respect to their
logical interpretation and to substitution.
\item Two properties are concerned with the consistency of interpretation
of abstract values obtained through {\tt from\_Z} and {\tt a\_add} as predicates
over integers.
\item Two properties are concerned with the logical properties of abstract
states computed with the help of {\tt learn\_from\_success} and
{\tt learn\_from\_failure}.
\item Four properties are concerned with ensuring that {\tt over\_approx},
{\tt join}, and {\tt thinner} do return or detect over-approximations correctly,
\item Three properties are concerned with ensuring that the {\tt consistent}
properties is preserved through  {\tt learn\_from...} and {\tt over\_approx}.
\end{enumerate}

\subsection{Maintaining the {\tt consistent} property}
For this abstract interpreter, we need again to prove that it maintains
the property that all states are duplication-free.  It is first established
for the {\tt join\_state} operation.  Actually, since the {\tt join\_state} operation
performs repetitive updates from the {\tt nil} state, the result
is duplication-free, regardless of the duplications in the
inputs.  This is easily obtained with a proof by induction on the first
argument.  For once, we show the full proof script.
\begin{verbatim}
Lemma join_state_consistent :
  forall s1 s2, consistent (join_state s1 s2).
intros s1 s2; induction s1 as [ | [x v] s1 IHs1]; simpl; auto.
apply consistent_update; auto.
Qed.
\end{verbatim}
The first two lines of this Coq excerpt give the theorem statement.
The line \hbox{\tt intros ...} explains that a proof by induction
should be done.  This proof raises two cases, and the \hbox{as ...}
fragment states that in the step case (the second case),
one should consider a list whose
tail is named {\tt s1} and whose first pair contains a variable {\tt x}
and an abstract value {\tt v}, and we have an induction hypothesis, which
should be named {\tt IHs1}: this induction hypothesis states that {\tt s1}
already satisfies the {\tt consistent} property.
The {\tt simpl} directive expresses
that the recursive function should be simplified if possible, and {\tt auto}
attempts to solve the goals that are generated.  Actually, the
computation of recursive functions leads to proving {\tt true = true}
in the base case and {\tt auto} takes care of this.  For the step
case, we simply need to rely on the theorem {\tt consistent\_update}
(see section~\ref{th:no_dups_update}).  The premise of this theorem
actually is {\tt IHs1} and {\tt auto} finds it.
\subsection{Relating input abstract states and pre-conditions}
Similarly to what was done for the naive abstract interpreter, we
want to ensure that the interpretation of the input abstract
state as 
a logical formula implies the pre-condition for the generated annotated
instruction and the generated post-condition.  For the while loop, this
relies on the fact that the selected invariant is obtained after repetitive
joins with the input state. We first establish two monotonicity properties for
the {\tt join\_state} function, we show only the first one:
\begin{verbatim}
Lemma join_state_safe_1 : forall g s1 s2,
  ia m g (s_to_a s1) -> ia m g (s_to_a (join_state s1 s2)).
\end{verbatim}
Then, we only need to propagate the property
up from the {\tt step1} function.  Again, we show only the first one
but there are similar lemmas for {\tt step2}, {\tt find\_inv}; and
we conclude with the property for {\tt ab2}:
\begin{verbatim}
Lemma step1_pc : forall g ab b s s', 
  ia m g (s_to_a s) -> ia m g (s_to_a s') -> 
  ia m g (s_to_a (step1 ab b s s')).
\end{verbatim}
\begin{verbatim}
Lemma ab2_pc :
  forall  i i' s s', ab2 i s = (i', s') ->
    forall g a, ia m g (s_to_a s) -> ia m g (pc i' a).
\end{verbatim}
The proof for {\tt step1\_pc} is a direct consequence of the definition
and the properties of {\tt join\_state}.  The proofs for {\tt step2} and
{\tt find\_inv} are done by induction on {\tt n}.  The proof for
{\tt ab2} is an easy induction on the instruction {\tt i}.  In particular,
the two state arguments to the function {\tt find\_inv} are both equal
to the input state in the case of {\tt while} loops.
\subsection{Validity of the generated conditions}
The main theorem is about ensuring that all verification conditions
are provable.  A good half of this problem is already taken care of when
we prove the theorem {\tt ab2\_pc}, which expresses that at each step
the state is strong enough to ensure the validity of the pre-condition for
the instruction that follows.  The main added difficulty is to verify
that the invariant computed for each while loop actually is invariant.
This difficulty is taken care of by the structure of the function
{\tt find\_inv}, which actually invokes the function {\tt is\_inv} on its
expected output before returning it.  Thus, we only need to prove that
{\tt is\_inv} correctly detects states that are invariants:
\begin{verbatim}
Lemma is_inv_correct :
   forall ab b g s s' s2 ai,
     is_inv ab s b = true -> learn_from_success s b = Some s' ->
     ab s' = (ai, s2) -> ia m g (s_to_a' s2) -> ia m g (s_to_a s).
\end{verbatim}
We can then deduce that {\tt find\_inv} is correct: the proof proceeds
by showing that the value this function returns is either verified using
{\tt is\_inv} or the {\tt nil} state.  The correctness statement for
{\tt find\_inv} has the following form:
\begin{verbatim}
Lemma find_inv_correct : forall ab b g i n init s s' s2 ai,
  learn_from_success (find_inv ab b init s i n) b = Some s' ->
  ab s' = (s2, ai) -> ia m g (s_to_a' s2) ->
  ia m g (s_to_a (find_inv ab b init s i n)).
\end{verbatim}
This can then be combined with the assumptions that
{\tt learn\_from\_success} and {\tt learn\_from\_failure} correctly
improve the information given in abstract state to show that the
value returned for while loops in {\tt ab2} is correct.  These assumptions
have the following form (the hypothesis for the {\tt learn\_from\_failure}
has a negated third assumption).
\begin{verbatim}
Hypothesis learn_from_success_sem :
 forall s b g, consistent s -> 
      ia m g (s_to_a s) -> ia m g (a_b b) ->
      ia m g (s_to_a' (learn_from_success s b)).
\end{verbatim}

\section{An interval-based instantiation}
The abstract interpreters we have described so far are generic and
are ready to be instantiated on specific abstract domains.  In this
section we describe an instantiation on an abstract domain to
represent intervals.  This domain of intervals contains intervals with
finite bounds and intervals with infinite bounds.  The interval
with two infinite bounds represents the whole type of integers.  We
describe these intervals with an inductive type that has four variants:
\begin{verbatim}
Inductive interval : Type :=
  above : Z -> interval
| below : Z -> interval
| between : Z -> Z -> interval
| all_Z : interval.
\end{verbatim}
For instance, the interval containing all values larger than
or equal to 10 is represented by {\tt above 10} and the whole
type of integers is represented by {\tt all\_Z}.

The interval associated to an integer is simply described as
the interval with two finite bounds equal to this integer.
\begin{verbatim}
Definition i_from_Z (x:Z) := between x x.
\end{verbatim}

When adding two intervals, it suffices to add the two bounds,
because addition preserves the order on integers.
Coping with all the variants of
each possible input yields a function with many cases.
\begin{verbatim}
Definition i_add (x y:interval) :=
  match x, y with 
    above x, above y => above (x+y)
  | above x, between y z => above (x+y)
  | below x, below y => below (x+y)
  | below x, between y z => below (x+z)
  | between x y, above z => above (x+z)
  | between x y, below z => below (y+z)
  | between x y, between z t => between (x+z) (y+t)
  | _, _ => all_Z
  end.
\end{verbatim}

The assertions associated to each abstract value can rely on
only one, as we can re-use the same comparison predicate
for almost all variants.  This is described in the
{\tt to\_pred} function.
\begin{verbatim}
Definition i_to_pred (x:interval) (e:aexpr) : assert :=
  match x with
    above a => pred "leq" (anum a::e::nil)
  | below a => pred "leq" (e::anum a::nil)
  | between a b => a_conj (pred "leq" (anum a::e::nil))
                      (pred "leq" (e::anum b::nil))
  | all_Z => a_true
  end.
\end{verbatim}
Of course, the meaning attached to the string {\tt "leq"} must
be correctly fixed in the corresponding instantiation for the {\tt m}
parameter:

\begin{verbatim}
Definition i_m (s : string) (l: list Z) : Prop :=
  if string_dec s "leq" then
    match l with x::y::nil => x <= y | _ => False end
  else False.
\end{verbatim}
\subsection{Learning from comparisons}
The functions {\tt i\_learn\_from\_success} and {\tt i\_learn\_from\_failure}
used when processing while loops can be made arbitrarily complex.  For
the sake of conciseness, we have only designed a pair of functions
that detect the case where the boolean test has the form {\tt x < e},
where {\tt e} is an arbitrary arithmetic expression.  In this case,
the function {\tt i\_learn\_from\_success} updates
only the value associated to {\tt x}: the initial interval
associated with {\tt x} is intersected with the interval of all values
that are less than the upper bound of the interval computed for {\tt e}.
An impossibility is detected when the lowest possible value for {\tt x}
is larger than or equal to the upper bound for {\tt e}.  Even this
simple strategy yields a function with many cases, of which we
show only the cases where both {\tt x} and {\tt e} have interval
values with finite bounds:
\begin{verbatim}
Definition i_learn_from_success s b :=
  match b with
    blt (avar x) e =>
      match a_af _ i_from_Z all_Z i_add s e,
            lookup _ all_Z s x with
      ...
      | between _ n, between m p =>
           if Z_le_dec n m then None else
           if Z_le_dec n p
           then Some (a_upd _ x (between m (n-1)) s)
           else Some s
    ...
      end
   | _ => Some s
   end.
\end{verbatim}
In the code of this function, the functions {\tt a\_af}, {\tt lookup},
and {\tt a\_upd} are parameterized by the functions from the datatype
of intervals that they use: {\tt i\_from\_Z}, {\tt all\_Z} and {\tt i\_add}
for {\tt a\_af}, {\tt all\_Z} for {\tt lookup}, etc.

The function {\tt i\_learn\_from\_failure} is designed similarly,
looking at upper bounds for {\tt x} and lower bounds for {\tt e} instead.
\subsection{Comparing and joining intervals}
The treatement of loops also requires a function to find upper bounds
of pairs of intervals and a function to compare two intervals.  These
functions are simply defined by pattern-matching on the kind of intervals
that are encountered and then comparing the upper and lower bounds.
\begin{verbatim}
Definition i_join (i1 i2:interval) : interval :=
  match i1, i2 with
    above x, above y =>
      if Z_le_dec x y then above x else above y
  ...
  | between x y, between z t =>
      let lower := if Z_le_dec x z then x else z in
      let upper := if Z_le_dec y t then t else y in
      between lower upper
  | _, _ => all_Z
  end.

Definition i_thinner (i1 i2:interval) : bool :=
  match i1, i2 with
    above x, above y => if Z_le_dec y x then true else false
  | above _, all_Z => true
  ...
  | between x _, above y => if Z_le_dec y x then true else false
  | between _ x, below y => if Z_le_dec x y then true else false
  | _, all_Z => true
  ...
  end.
\end{verbatim}
\subsection{Finding over-approximations}
When the interval associated to a variable does not stabilize,
an over-approxi\-mation must be found for this interval.  We implement
an approach where several steps of over-approxi\-mation can be taken
one after the other.  For intervals, finding over-approxi\-mations
can be done by pushing one of the bounds of each interval to infinity.
We use the fact that the generic abstract interpreter calls the
over-approxi\-mation with two values to choose the bound that should be
pushed to infinity: in a first round of over-approxi\-mation, only the
bound that does not appear to be stable is modified.  This strategy
is particularly well adapted for loops where one variable is increased
or decreased by a fixed amount at each execution of the loop's body.

The strategy is implemented in two functions, the first function
over-approxi\-mates an interval, the second function applies the
first to all the intervalles found in a state.
\begin{verbatim}
Definition open_interval (i1 i2:interval) : interval :=
  match i1, i2 with
    below x, below y => if Z_le_dec y x then i1 else all_Z
  | above x, above y => if Z_le_dec x y then i1 else all_Z
  | between x y, between z t =>
    if Z_le_dec x z then if Z_le_dec t y then i1 else above x
    else if Z_le_dec t y then below y else all_Z
  | _, _ => all_Z
  end.

Definition open_intervals (s s':state interval) : state interval :=
  map (fun p:string*interval => 
         let (x, v) := p in 
           (x, open_interval v (lookup _ all_Z s' x))) s.
\end{verbatim}
The result of {\tt open\_interval i1 i2} is expected
to be an over-approximation of {\tt i1}.  The second argument
{\tt i2} is only used to choose which of the bounds of {\tt i1}
should be modified.

The function {\tt i\_over\_approx} receives a numeric parameter to
indicate the strength of over-approximation that should be applied.
Here, there are only two strengths: at the first try (when the level is 
larger
than 0), the function applies {\tt open\_intervals}; at the second try,
it simply returns the {\tt nil} state, which corresponds to the
{\tt top} value in the domain of abstract states.
\begin{verbatim}
Definition i_over_approx n s s' := 
  match n with
    S _ => open_intervals s s'
  | _ => nil
  end.
\end{verbatim}
The abstract interpreter also requires two functions that compute
the number of attempts at each level of repetitive operation.  We define
these two functions as constant functions:
\begin{verbatim}
Definition i_choose_1 (s:state interval) (i:instr) := 2%nat.
Definition i_choose_2 (s:state interval) (i:instr) := 3%nat.
\end{verbatim}

Once the type {\tt interval} and the various functions are provided we
obtain an abstract interpreter for computing with intervals.
\begin{verbatim}
Definition abi :=
 ab2 interval i_from_Z all_Z i_add i_to_pred
   i_learn_from_success i_learn_from_failure
   i_join i_thinner i_over_approx i_choose_1 i_choose_2.
\end{verbatim}

We can already run this instantiated interpreter inside the
Coq system.  For instance, we can run the interpreter
on the instruction:
\begin{verbatim}
  while x < 10 do x := x + 1 done
\end{verbatim}
This gives the following dialog (where
the answer of the Coq system is written in italics):
\begin{alltt}
Eval vm_compute in
  abi (while (blt (avar "x") (anum 10))
          (assign "x" (aplus (avar "x") (anum 1))))
      (("X", between 0 0)::nil).
\textit{     = (a_while (blt (avar "x") (anum 10))
        (a_conj
          (a_conj (pred "leq" (anum 0 :: avar "x" :: nil))
             (pred "leq" (avar "x" :: anum 10 :: nil))) a_true)
        (pre
          (a_conj
            (a_conj (pred "leq" (anum 0 :: avar "x" :: nil))
              (pred "leq" (avar "x" :: anum 9 :: nil))) a_true)
          (a_assign "x" (aplus (avar "x") (anum 1)))),
       Some (("x", between 10 10) :: nil))
     : a_instr * option (state interval)}
\end{alltt}
\section{Conclusion}
This paper describes how the functional language present in a higher-order
theorem prover can be used to encode a tool to perform a static analysis
on an arbitrary programming language.  The example programming language
is chosen to be extremely simple, so that the example can be described
precisely in this tutorial paper.  The static analysis tool that we
described is inspired by the approach of abstract interpretation.  However
this work is not a comprehensive introduction to abstract interpretation,
nor does it cover all the aspects of encoding abstract interpretation
inside a theorem prover.  Better descriptions of abstract interpretation
and its formal study are given in
\cite{PichardieThesis,TCSAppSem:BessonJensenPichardie,FICS08:Pichardie}.

The experiment is performed with the Coq system.  More extensive studies
of programming languages using this system have been developed over the
last years.  In particular, experiments by the Compcert team show that
not only static analysis but also efficient compilation can be described
and proved correct
 \cite{Bertot-Gregoire-Leroy-05,Leroy-compcert-06,Blazy-Dargaye-Leroy-06}.
  Coq is also used extensively for the study
of functional programming languages, in particular to study the properties
of type systems and there are a few Coq-based solutions to the
general landmark objective known as POPLMark \cite{aydemir05mechanized}.

The abstract interpreter we describe here is inefficient in many respects:
when analysing the body of a loop, this loop needs to be executed abstractly
several times, the annotations computed each time are forgotten, and then
when an invariant is discovered, the whole process needs to be done again
to produce the annotated instruction.  A more efficient interpreter could
be designed where computed annotations are kept in memory long enough to
avoid recomputation when the invariant is found.  We did not design the
abstract interpreter with this optimisation, thinking that the sources
of inefficiency could be calculated away through systematic transformation
of programs, as studied in another paper in this volume.  The
abstract interpreter provided with the paper \cite{Bertot08a} contains
some of these optimisations.

An important remark is that program analyses can be much more efficient
when they consider the relations between several variables at a time, as
opposed to the experiment described here where the variables are considered
independently of each other.  More precise work where relations between
variables can be tracked is possible, on the condition that abstract
values are used to describe complete states, instead of single variables
as in \cite{Bertot-Gregoire-Leroy-05}, where
the result of the analysis is used as a basis for a compiler
optimisation known as {\em common subexpression elimination}.

We have concentrated on a very simple while language in this paper, for
didactical purposes.  However, abstract interpreters have been applied to
much more complete programming languages.  For instance, the Astree 
\cite{astree05} analyser
covers most of the C programming language.  On the other hand, the foundational
papers describe abstract interpretation in terms of analyses on
control flow graphs.  The idea of abstract
interpretation is general enough that it should be possible to apply it to
any form of programming language.

\bibliography{a}

\begin{thebibliography}{10}

\bibitem{aydemir05mechanized}
B.~Aydemir, A.~Bohannon, M.~Fairbairn, J.~Foster, B.~Pierce, P.~Sewell,
  D.~Vytiniotis, G.~Washburn, S.~Weirich, and S.~Zdancewic.
\newblock Mechanized metatheory for the masses: The {POPLmark} challenge.
\newblock In {\em Proceedings of the Eighteenth International Conference on
  Theorem Proving in Higher Order Logics (TPHOLs 2005)}, 2005.

\bibitem{Bertot08a}
Yves Bertot.
\newblock Theorem proving support in programming language semantics.
\newblock Technical Report 6242, INRIA, 2007.
\newblock to appear in a book in memory of Gilles Kahn.

\bibitem{coqart}
Yves Bertot and Pierre Cast{\'e}ran.
\newblock {\em Interactive Theorem Proving and Program Development, Coq'Art:the
  Calc ulus of Inductive Constructions}.
\newblock Springer-Verlag, 2004.

\bibitem{Bertot-Gregoire-Leroy-05}
Yves Bertot, Benjamin Grégoire, and Xavier Leroy.
\newblock A structured approach to proving compiler optimizations based on
  dataflow analysis.
\newblock In {\em Types for Proofs and Programs, Workshop TYPES 2004}, volume
  3839 of {\em Lecture Notes in Computer Science}, pages 66--81. Springer,
  2006.

\bibitem{TCSAppSem:BessonJensenPichardie}
Fr\'ed\'eric Besson, Thomas Jensen, and David Pichardie.
\newblock Proof-carrying code from certified abstract interpretation to
  fixpoint compression.
\newblock {\em Theoretical Computer Science}, 364(3):273--291, 2006.

\bibitem{Blazy-Dargaye-Leroy-06}
Sandrine Blazy, Zaynah Dargaye, and Xavier Leroy.
\newblock Formal verification of a {C} compiler front-end.
\newblock In {\em FM 2006: Int. Symp. on Formal Methods}, volume 4085 of {\em
  Lecture Notes in Computer Science}, pages 460--475. Springer, 2006.

\bibitem{CousotCousot77}
Patrick Cousot and Radhia Cousot.
\newblock Abstract interpretation: a unified lattice model for static analysis
  of programs by construction or approximation of fixpoints.
\newblock In {\em Conference Record of the Fourth ACM Symposium on Principles
  of Programming Languages, POPL'77}, pages 238--252. ACM Press, 1977.

\bibitem{astree05}
Patrick Cousot, Radhia Cousot, J{\'e}rome Feret, Antoine~Mine
  Laurent~Mauborgne, David Monniaux, and Xavier Rival.
\newblock The {Astr{\'e}e} analyzer.
\newblock In {\em European Symposium on Programming, {ESOP'XIV}}, volume 3444
  of {\em LNCS}, pages 21--30. Springer, 2005.

\bibitem{Dijkstra76}
Edsger~W. Dijkstra.
\newblock {\em A discipline of Programming}.
\newblock Prentice Hall, 1976.

\bibitem{Leroy-compcert-06}
Xavier Leroy.
\newblock Formal certification of a compiler back-end, or: programming a
  compiler with a proof assistant.
\newblock In {\em 33rd symposium Principles of Programming Languages}, pages
  42--54. ACM Press, 2006.

\bibitem{PichardieThesis}
David Pichardie.
\newblock {\em Interpr{\'e}tation abstraite en logique intuitionniste~:
  extraction d'analyseurs Java certifi{\'e}s}.
\newblock PhD thesis, Universit{\'e} Rennes~1, 2005.
\newblock In french.

\bibitem{FICS08:Pichardie}
David Pichardie.
\newblock {Building certified static analysers by modular construction of
  well-founded lattices}.
\newblock In {\em Proc. of the 1st International Conference on Foundations of
  Informatics, Computing and Software (FICS'08)}, Electronic Notes in
  Theoretical Computer Science, 2008.

\bibitem{Coq}
{The Coq development team}.
\newblock The coq proof assistant, 2008.
\newblock \texttt{http://coq.inria.fr}.

\end{thebibliography}
\bibliographystyle{plain}
\end{document}